\documentclass[pdflatex,sn-mathphys-num]{sn-jnl}
\setlength{\textwidth}{16.5cm}
\setlength{\oddsidemargin}{-10pt}             
\setlength{\evensidemargin}{-10pt}            
\setlength{\textheight}{25cm}
\makeatletter
\@twosidefalse
\makeatother

\usepackage{graphicx}%
\usepackage{multirow}%
\usepackage{amsmath,amssymb,amsfonts}%
\usepackage{amsthm}%
\usepackage{mathrsfs}%
\usepackage[title]{appendix}%
\usepackage{xcolor}%
\usepackage{textcomp}%
\usepackage{manyfoot}%
\usepackage{booktabs}%
\usepackage{algorithm}%
\usepackage{algorithmicx}%
\usepackage{algpseudocode}%
\usepackage{listings}%
\usepackage[table]{xcolor}
\usepackage{diagbox}
\usepackage{float}
\usepackage[left]{lineno}
\theoremstyle{thmstyleone}%
\newtheorem{theorem}{Theorem}
\newtheorem{proposition}[theorem]{Proposition}%
\newtheorem{corollary}{Corollary}[theorem]
\theoremstyle{thmstyletwo}%
\newtheorem{lemma}{Lemma}%
\theoremstyle{thmstylethree}%
\newtheorem{definition}{Definition}%

\raggedbottom
\begin{document}

\title[Article Title]{Generative design and validation of therapeutic peptides for glioblastoma based on a potential target ATP5A}


\author[1]{\fnm{Hao} \sur{Qian}}
\equalcont{These authors contributed equally to this work.}
\author[2]{\fnm{Pu} \sur{You}}
\equalcont{These authors contributed equally to this work.}

\author[1]{\fnm{Lin} \sur{Zeng}}
\author[1]{\fnm{Jingyuan} \sur{Zhou}}
\author[1]{\fnm{Dengdeng} \sur{Huang}}
\author[2]{\fnm{Kaicheng} \sur{Li}}
\author*[1]{\fnm{Shikui} \sur{Tu}}\email{tushikui@sjtu.edu.cn}
\author[1]{\fnm{Lei} \sur{Xu}}

\affil[1]{ \orgdiv{Centre for Cognitive Machines and Computational Health (CMaCH)}, \orgdiv{School of Computer Science},
  \orgname{Shanghai Jiao Tong University}, \city{Shanghai}, \country{China}}
\affil[2]{\orgdiv{QuietD Biotech}, \city{Shanghai}, \country{China}}

\abstract{
Glioblastoma (GBM) remains the most aggressive tumor, urgently requiring novel therapeutic strategies. Here, we present a dry-to-wet framework combining generative modeling and experimental validation to optimize peptides targeting ATP5A, a potential peptide-binding protein for GBM. Our framework introduces the first lead-conditioned generative model, which focuses exploration on geometrically relevant regions around lead peptides and mitigates the combinatorial complexity of de novo methods. Specifically, we propose POTFlow, a \underline{P}rior and \underline{O}ptimal \underline{T}ransport-based \underline{Flow}-matching model for peptide optimization. POTFlow employs secondary structure information (e.g., helix, sheet, loop) as geometric constraints, which are further refined by optimal transport to produce shorter flow paths. With this design, our method achieves state-of-the-art performance compared with five popular approaches. When applied to GBM, our method generates peptides that selectively inhibit cell viability and significantly prolong survival in a patient-derived xenograft (PDX) model. As the first lead peptide-conditioned flow matching model, POTFlow holds strong potential as a generalizable framework for therapeutic peptide design.

}

\keywords{Peptide Design, Glioblastoma, Flow Matching, Generative Models}

\maketitle
\section*{Introduction}
Glioblastoma (GBM) is an aggressive brain malignancy with rapid cell proliferation, diffuse tissue invasion, and a profoundly immunosuppressive microenvironment~\cite{stupp2005radiotherapy,van2022advances}. Despite standard-of-care treatments involving surgical resection followed by radiotherapy and temozolomide chemotherapy, the median survival remains approximately 15 months, with nearly universal recurrence. ATP5A, the alpha subunit and core component of mitochondrial ATP synthase's catalytic site, has emerged as a promising therapeutic target in GBM owing to its abnormal surface localization and critical role in tumor energy metabolism~\cite{xu2016atp5a1,watsonmitochondria}. Targeting ATP5A in GBM presents a compelling therapeutic opportunity, yet its atypical surface exposure poses significant challenges for small-molecule drug design. Peptides offer a promising alternative, as their intermediate molecular size and structural flexibility enable them to access protein surfaces that are typically considered undruggable~\cite{craik2013future, henninot2018pep, wang2022therapeutic}. Moreover, peptides exhibit favorable pharmacological properties, including high target specificity, low immunogenicity, and cost-effective synthesis~\cite{giordano2014neuroactive, fosgerau2015peptide, davda2019immunogenicity}.  

Traditional peptide design approaches, such as alanine scanning~\cite{lefevre1997alanine} and combinatorial screening~\cite{quartararo2020ultra}, are constrained by low throughput and the combinatorial explosion of sequence variants~\cite{fosgerau2015peptide, manning2010stability}. Recently, deep learning-based methods have been increasingly adopted to model protein–peptide interactions~\cite{bhardwaj2016accurate,bryant2022evobind,swanson2022tertiary,cao2022design,bhat2023novo,chen2024pepmlm}. Besides, deep generative models, such as diffusion~\cite{song2020score, song2020denoising,dhariwal2021diffusion} and flow matching models~\cite{lipman2022flow,liu2022flow}, have demonstrated their potential in peptide binder design~\cite{li2024pepflow, lin2024ppflow,kong2024pepglad,wang2024target}. 


Despite recent successes in designing peptide binders, there are still two limitations. First, current deep generative methods focus mainly on generating peptides from scratch, while in practical drug discovery, peptides are typically derived and optimized from lead sequences, as high sequence similarity often correlates with functional similarity~\cite{orengo1999protein}. Notably, previous studies~\cite{Kaicheng2023Patent,You_qtd} elucidated the DDIT4L--TOM40--ATP5A pathway in GBM and designed a DDIT4L-derived peptide that binds ATP5A, impairs mitochondrial ATP synthase activity, and suppresses tumor growth in preclinical models. However, this ATP5A-targeting peptide essentially represents a first-generation lead obtained by motif selection, and its potency and therapeutic window remain suboptimal, underscoring the need for systematic lead optimization. Therefore, designing peptides based on lead sequences is a more reasonable and practically relevant direction. Second, the absence of experimental validation in most computational peptide design methods raises concerns about their practical applicability. 



To overcome these limitations, we introduce a dry-to-wet peptide design framework for GBM. Unlike de novo methods, our framework explores the local latent space surrounding a lead peptide to identify functionally relevant candidates based on a carefully designed prior. Specifically, we propose a \underline{P}rior and \underline{O}ptimal \underline{T}ransport-based \underline{Flow}-matching (POTFlow) model for peptide optimization. POTFlow initializes a prior distribution using secondary structure types (e.g., helix, sheet, loop) from the lead peptide as geometric constraints. This prior is further refined via optimal transport to shift the sampling distribution closer to the lead peptide, enabling shorter flow paths and improved optimization performance. Besides, by integrating expert system filtering and experimental validation, our framework bridges in silico peptide design with wet-lab evaluation. When applied to GBM, our framework generated peptides that surpassed the lead in inhibiting cell viability in vitro and exhibited tumor cell selectivity. Moreover, it identified a candidate that significantly suppressed tumor growth in a patient-derived xenograft (PDX) model, highlighting its strong potential for therapeutic peptide design.

\textbf{Methodologically, POTFlow extends multimodal flow matching from unconditional de novo generation to a lead-conditioned optimization regime.} To our knowledge, POTFlow is the first lead peptide-conditioned flow-matching frameworks with integrated experimental validation, spanning in silico design, in vitro assays, and in vivo PDX models. 


\begin{figure} [th]
    \centering
    
    \includegraphics[width=0.98\linewidth]{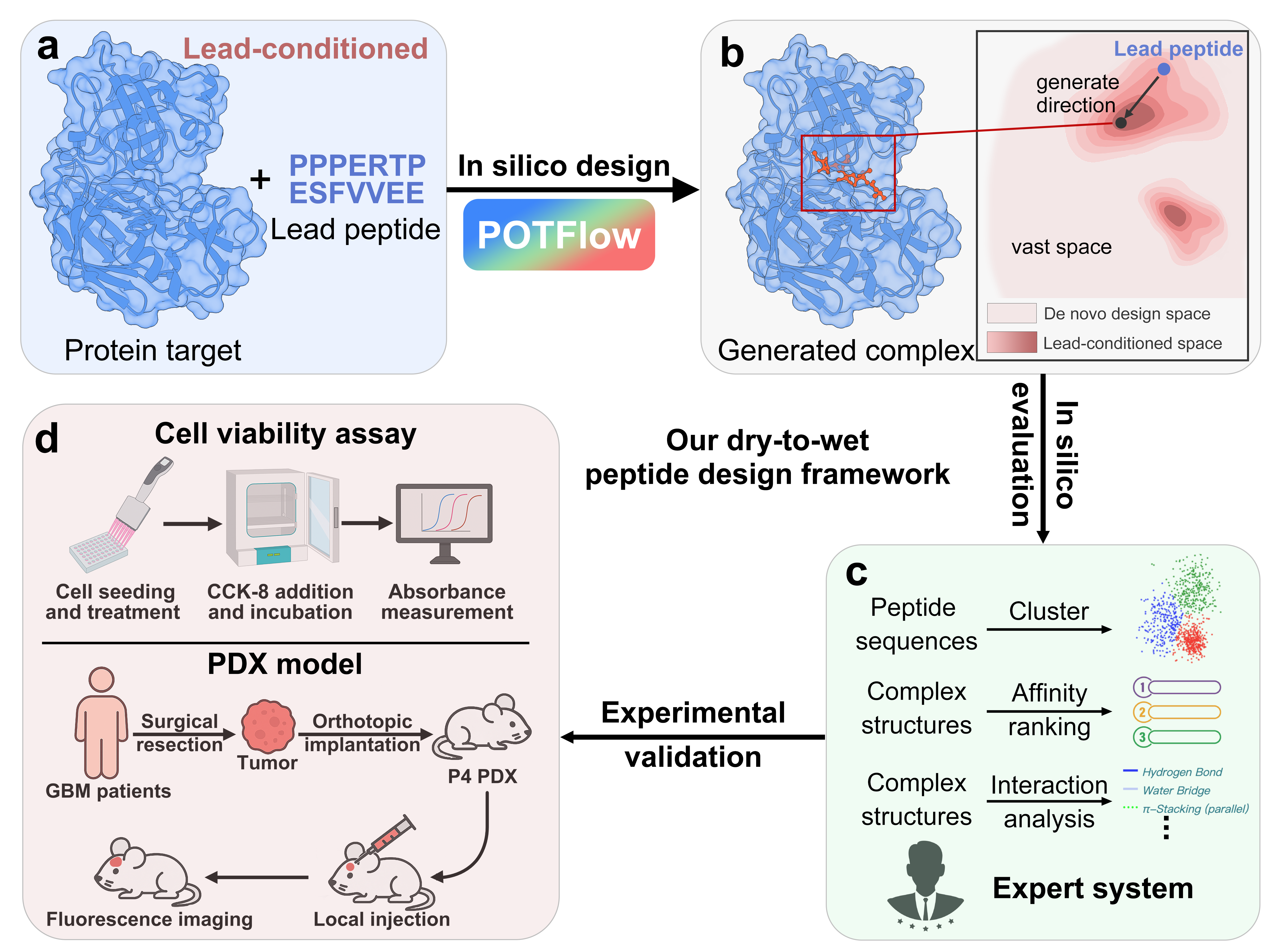}
    \caption{Schematic workflow from in silicon peptide design to experimental validation. \textbf{a)} We start from a lead peptide sequence and 3D structure of its target protein. \textbf{b)} POTFlow efficiently samples peptide–protein complexes within the lead peptide-conditioned space. \textbf{c)} An expert system clusters candidate peptides, ranks their binding affinities, and analyzes intermolecular interactions (e.g., hydrogen bonds, water bridges, $\pi$–stacking). \textbf{d)} Promising candidates are synthesized and tested by cell‐viability assays and patient-derived xenograft (PDX) models. 
}
    \label{fig:workflow}
    
\end{figure}

\section*{Results}


\subsection*{An overview of the proposed peptide design framework for glioblastoma }

We build a deep generative model jointly with experimental analysis in a dry-to-wet framework to design peptide therapeutics for GBM (Fig.~\ref{fig:workflow}). The core of our framework is POTFlow, the first lead-peptide-driven deep generative model, which fundamentally sets it apart from conventional de novo design methods. De novo design approaches often suffer from inefficiency and low success rates, primarily due to the vast combinatorial search space and the absence of biological context resulting from the use of an unconditional prior. In contrast, POTFlow utilizes the lead peptide to guide and restrict exploration within its local sequence–structure neighborhood (Fig.~\ref{fig:workflow}a–b), thereby focusing the search on geometrically relevant regions and increasing the likelihood of identifying functional candidates. These candidates are subsequently evaluated by an expert system (Fig.~\ref{fig:workflow}c). Top candidates are further synthesized and validated by cell viability assays and patient-derived xenograft (PDX) models (Fig.~\ref{fig:workflow}d). Importantly, our framework bridges computational design with experimental validation which enables efficient translation of in silico design into experimentally validated peptide therapeutics.

\begin{figure} [tbp]
    \centering
    \includegraphics[width=0.98\linewidth]{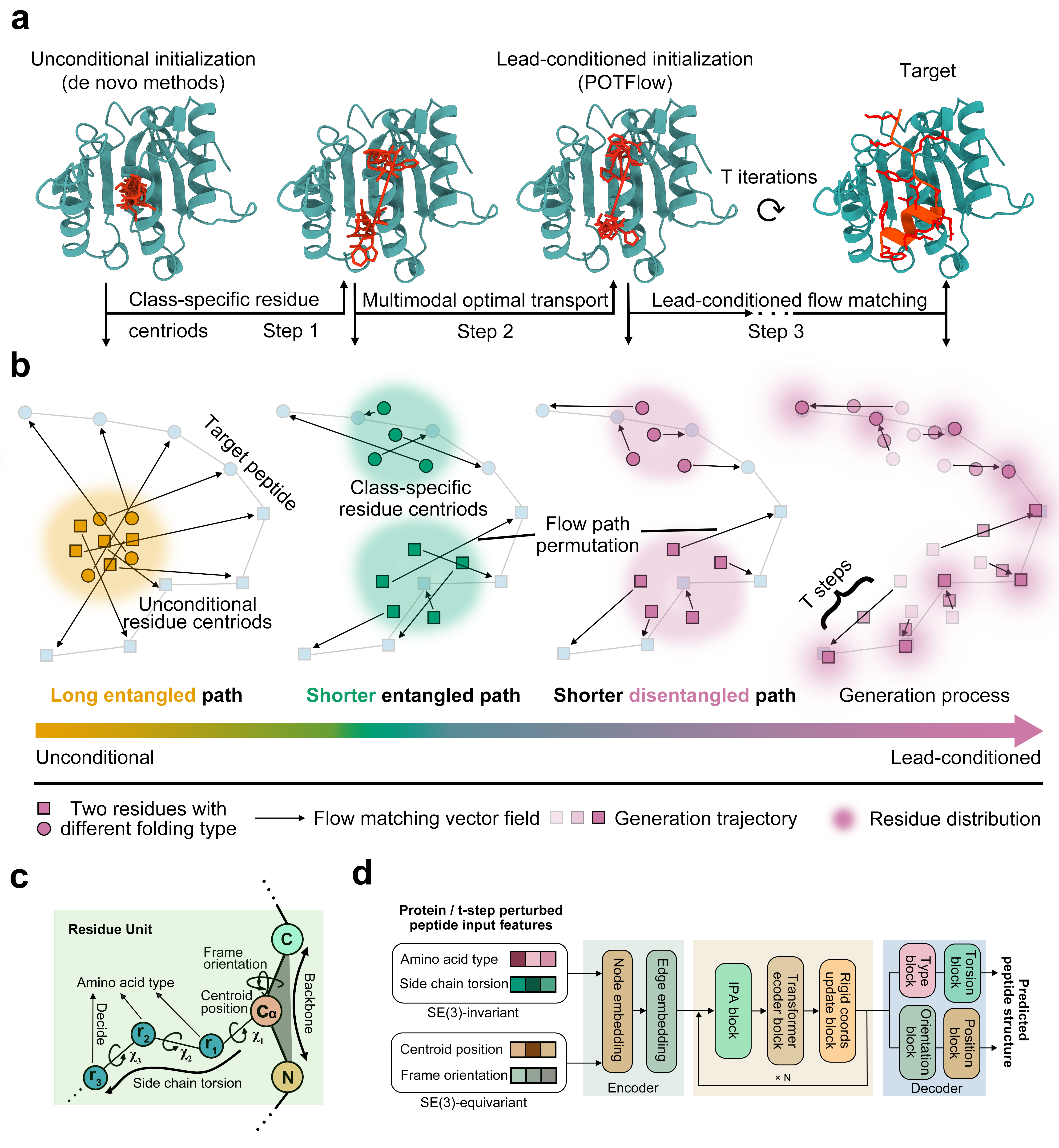}
    \caption{\textbf{a)} An overview of \textbf{POTFlow}. First, class-specific centroids are computed from lead peptide structures. Here, ``class'' represents the peptide folding type (e.g., helix, sheet, loop). Next, based on the optimal transport theory, multimodal couplings between peptides and initial noise variables are established. Finally, short disentangled paths are built and lead-conditioned flow matching model generates high-affinity complex structures. \textbf{b)} Corresponding residue-level illustration of how POTFlow constructs more efficient generation trajectories via lead-conditioned initialization. \textbf{c)} Structural definition of a residue unit. \textbf{d)} Computational workflow of our model at $t$ time step.}
    
    \label{fig:POTFlow}
\end{figure}

Here, we explain how POTFlow incorporates the lead peptide as an additional input. 
As illustrated in the left plot in Fig.~\ref{fig:POTFlow}b, de novo methods first sample noisy peptides from an unconditional distribution. Time-independent vector fields are then learned to transport these samples along flow matching paths toward the target peptides. POTFlow reformulates this process by conditioning the learned vector fields on the structure of a lead peptide. Instead of conditioning the model by simply appending a fixed lead-peptide embedding to all inputs, POTFlow constructs a hierarchy of lead-aware representations that modulate the flow fields at multiple geometric and sequence levels.

Firstly, we transform the unconditional residue centriods into class-specific residue centriods, where each ``class'' corresponds to a peptide folding type (e.g., helix, sheet, or loop) defined by the lead peptide structure (Step 1 in Fig.~\ref{fig:POTFlow}a). Here, the centroid refers to the $C\alpha$ position of each residue (see Fig.~\ref{fig:POTFlow}c for details). We show that class-specific initialization leads to shorter flow matching paths (left and center-left panels in Fig.~\ref{fig:POTFlow}b), thereby improving generation efficiency. A rigorous proof is provided in Supplementary Notes. After this step, an ``improved'' noisy peptide is sampled. 

Secondly, we apply multimodal optimal transport (OT) between the lead peptide and the noisy peptide (Step 2 in Fig.~\ref{fig:POTFlow}a). As shown in Fig.~\ref{fig:POTFlow}c, a peptide consists of four multimodal representations, including residue type in discrete sequence space, centroid coordinates in Euclidean space, backbone frame orientations on the SO(3) manifold, and side-chain torsion angles on the torus manifold. For clarity, we only visualize the changes  of residue centroid paths after OT operation (center-left and center-right panels in Fig.~\ref{fig:POTFlow}b). The OT operation can be interpreted as a permutation that reassigns noisy residues to their target counterparts, constructing more efficient pairings and minimizing the overall length of flow matching paths. Moreover, optimal transport naturally yields disentangled flow paths, facilitating more stable and efficient ODE-based generation~\cite{tong2023improving}. Together, class-specific initialization and optimal transport-based residue pairing constitute a lead-conditioned scheme that guides the generation process along shorter and more disentangled flow trajectories. Our approach improves both the efficiency and stability of the ODE-based generative process, as formally justified in Supplementary Notes~\ref{app:proof}. 

Lastly, the aforementioned process produces paired noisy and lead peptides, which serve as anchors for the subsequent flow-based generation process. We then build a straight flow matching path by linearly interpolating within paired samples (Step 3 in Fig.~\ref{fig:POTFlow}a; rightmost panel in Fig.~\ref{fig:POTFlow}b). We refer to these procedures as Coupled Conditional Flow Matching (C$^2$FM). Full implementation details are provided in Methods section and Supplementary Materials~\ref{app:net}.

We then train a neural network $\phi_\theta$ to approximate the velocity field along the flow path. As illustrated in Fig.~\ref{fig:POTFlow}d, at an arbitrary timestep $t$, four residue-related variables are provided as input to $\phi_\theta$, which is optimized by minimizing the squared error against the lead peptide. During inference, starting from a noisy peptide sampled from the lead-conditioned initialization, we numerically integrate the corresponding ODE using the trained $\phi_\theta$, thereby transporting the system smoothly from noise to the desired peptide structure.

\begin{table*}[t]
\rowcolors{2}{white}{gray!10}
    \centering
\renewcommand{\arraystretch}{1.7}
\resizebox{\linewidth}{!}{
    \begin{tabular}{ccccccc}
        \toprule
        \multirow{2}{*}{Experiments} & \multicolumn{2}{c}{Distribution} & \multicolumn{2}{c}{Energy} & \multicolumn{2}{c}{Geometry} \\
        \cmidrule(r){2-3}
        \cmidrule(r){4-5}
        \cmidrule(l){6-7}
        \multicolumn{1}{c}{} & Similarity \% $\uparrow$ & Compactness \% $\uparrow$ & Affinity \% $\uparrow$ & Stability \% $\uparrow$ & RMSD \AA \ $\downarrow$  & BSR \% $\uparrow$ \\
        \midrule
        RFDiffusion      &46.26 &74.61 &16.53 &\textbf{26.82} &4.17  &26.71 \\
        ProteinGenerator &47.61 &77.43 &13.47 &23.84 &4.35  &24.62 \\
        PepFlow          &49.74 &79.77 &21.37 &18.15 &2.07  &86.89 \\
        PepGLAD          &24.93 &67.90 &10.47 &20.39 &3.83  &19.34 \\
        PepHAR           &20.89 &70.42 &20.53 &16.62 &2.68  &86.74 \\
        \midrule
        Vanilla      &32.20   &72.71 &18.37 &21.89 &2.44  &79.0 \\
        POTFlow w/o class prior  &51.96 &94.61 &24.21 &23.12 &1.77 &86.03 \\
        POTFlow w/o OT           &49.75 &81.13 &25.62 &26.35 &1.85 &\textbf{87.10} \\
        POTFlow                  &\textbf{53.44} &\textbf{95.07} &\textbf{30.56} &23.80 &\textbf{1.66} &87.01 \\
        \bottomrule
    \end{tabular}
}
    \vspace{1em}
    \caption{Comparison of five de novo peptide design methods and three POTFlow variants on peptide benchmark. Bold values denote the best performance for each metric.}
    \vspace{-.5em}

    \label{tab:benchmark}
\end{table*}

\subsection*{POTFlow outperforms existing methods on peptide benchmark}

To evaluate our model, we curated a benchmark dataset from PepBDB~\cite{wen2019pepbdb} and Q-BioLip~\cite{wei2024qbiolip} following prior work~\cite{li2024pepflow, li2024hotspot}, yielding 8,207 training samples and 158 test complexes. We evaluate POTFlow against five de novo generative methods, including diffusion-based models, i.e., RFDiffusion~\cite{watson2023rfd}, ProteinGenerator~\cite{lisanza2023pg}, and PepGLAD~\cite{kong2024pepglad}, a flow matching model, PepFlow~\cite{li2024pepflow} and an autogressive model PepHAR~\cite{li2024hotspot}. De novo methods explore the complete peptide space, they often suffer from inefficiency and low success rates due to the vast combinatorial landscape and lack of biological priors. In contrast, POTFlow generates candidates conditioned on a lead peptide, narrowing the search to regions more likely to yield bioactive candidates. Here, classifier-guided techniques~\cite{dhariwal2021diffusion,chen2025generating,qian2024kgdiff} are not considered, as they require iterative gradient-based refinement with external scoring functions. Such techniques are computationally expensive, vulnerable to biased scoring functions, and unstable during inference. 

As shown in Table~\ref{tab:benchmark}, POTFlow outperforms existing de novo peptide design methods across multiple evaluation metrics, demonstrating the effectiveness of its lead-conditioned generation strategy. In terms of distribution metrics, POTFlow achieves the highest Similarity and Compactness, indicating that it generates peptides structurally and sequentially closer to the lead peptide while maintaining a highly concentrated distribution in the latent space. This advantage results from the proposed C$\mathrm{^2}$FM technique, which effectively restricts the sampling process to a focused region around the lead peptide.

For energy metrics, POTFlow attains the highest Affinity score, reflecting its superior capability in generating peptides with stronger predicted binding to the target protein. In terms of geometric quality, POTFlow obtains the lowest RMSD, confirming its precision in producing structurally accurate peptide conformations. These improvements are attributed to the integration of class-specific priors and optimal transport–based residue pairing, which collectively construct shorter, disentangled flow paths for generation. POTFlow also achieves a high binding site rate (BSR), which measures the overlap between generated and native peptide binding interfaces. This highlights the effectiveness of lead-conditioned generation in capturing geometrically relevant interactions overlapping with the binding mode of lead peptides. For detailed metric definitions, please refer to the Methods section.

\begin{figure}[t] 
    \centering
    \includegraphics[width=\linewidth]{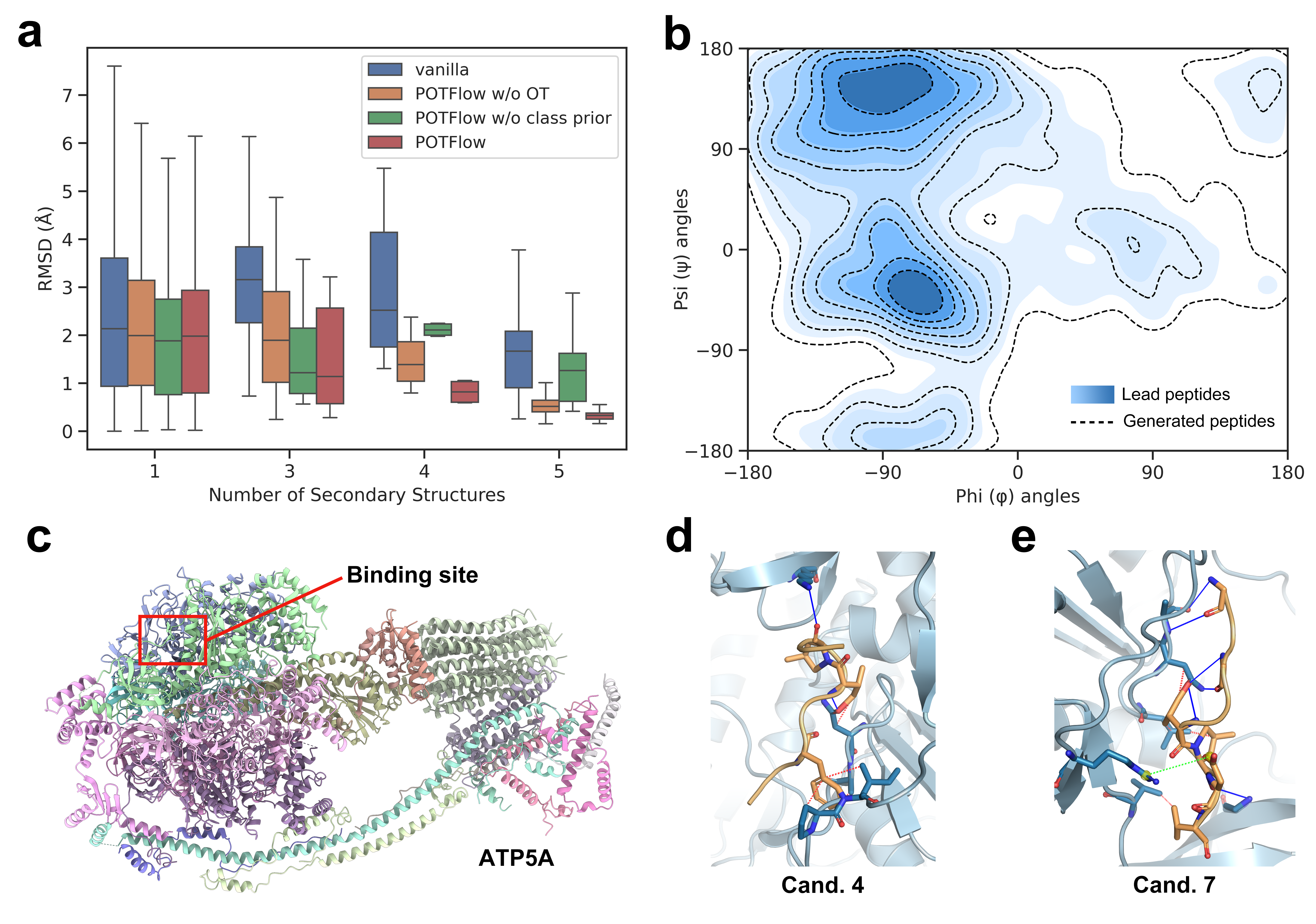}
    \caption{\textbf{a)} RMSD values on different number of secondary structures across four models. The RMSD values are computed between generated peptides and the lead peptides in the test set. \textbf{b)} Ramachandran plot of POTFlow generated and lead peptides. \textbf{c)} Visualization of the ATP5A subunit. The red box indicates the experimentally validated peptide-binding site used as the input for subsequent generative modeling. \textbf{d-e)} Detailed non-covalent interactions of two generated candidates with protein ATP5A, as identified by the Protein–Ligand Interaction Profiler (PLIP)~\cite{salentin2015plip}. Hydrogen bonds are shown as blue solid lines, hydrophobic contacts as red dashed lines, and salt bridges as green dashed lines.}
    \label{fig:rmsd_phipsi}
\end{figure}

\subsection*{Ablation studies highlight the effectiveness of POTFlow}

The last four rows of Table~\ref{tab:benchmark} present an ablation study of POTFlow, testing different variations and configurations of the model to assess their impact on various performance metrics. When class priors are removed (POTFlow w/o class prior), there is a noticeable drop in performance, especially for the affinity metric. Similarly, the removal of optimal transport policy (POTFlow w/o ot) also leads to reduced performance, particularly in terms of compactness and similarity metrics. When both class priors and optimal transport policy are removed (Vanilla), the performance drops to the worst across most metrics.

\begin{figure}[htbp]
    \centering
    \includegraphics[width=\linewidth]{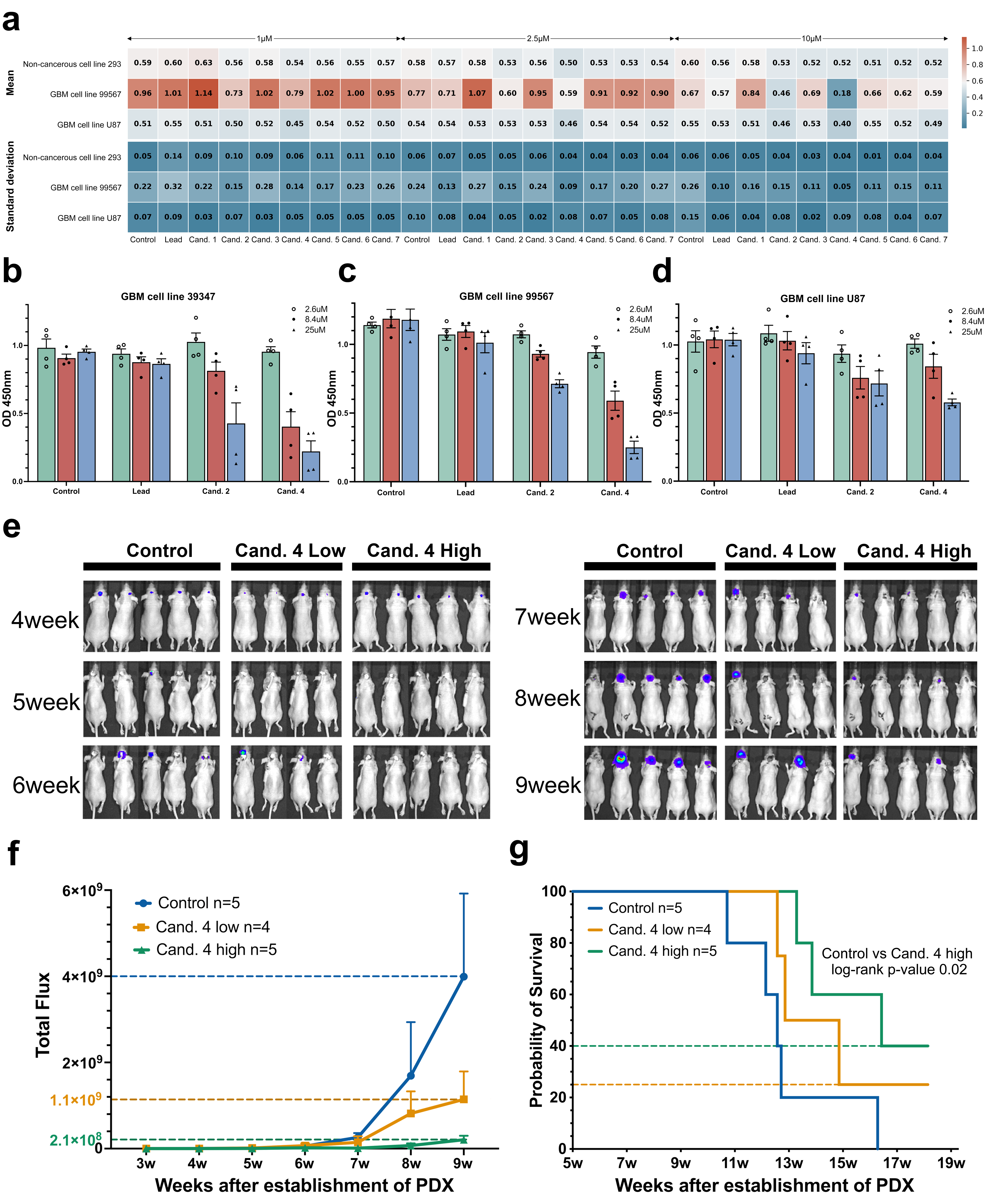}
    \caption{ \textbf{a)}: Heatmap of cell viability (mean $\pm$ SE) in GBM and non-cancerous cells. Cells from non-cancerous cell line 293 and two GBM patient-derived lines (99567 and U87) were treated with control, a reference lead peptide, and seven candidate peptides (Cand.~1–7) at three concentrations (1 $\mu$M, 2.5 $\mu$M, and 10 $\mu$M). Viability was measured by CCK-8 assay and normalized to the control for each cell line. The upper panel shows the mean normalized viability, and the lower panel displays the corresponding standard deviation across nine independent experiments. Color intensity reflects relative viability (blue = lower, red = higher). \textbf{b-d)}: Cell viability of GBM cell lines (39347, 99567 and U87 respectively) with control, a lead peptide, Cand.~2 and Cand.~4 at 2.6 $\mu$M (open circles), 8.4 $\mu$M (filled circles) and 25 $\mu$M (triangles). The y-axis represents absorbance at 450 nm . \textbf{e)} Bioluminescence imaging of tumor progression in mice under different treatment conditions. Images were taken weekly from week 4 to week 9. \textbf{f)} In vivo bioluminescence imaging of PDX mice following local delivery of Cand. 4 via subcutaneous minipumps. Mice received either control (blue circles, $n = 5$, 10mg/kg), low-dose Cand.~4 (yellow squares, $n = 4$, 10mg/kg), or high-dose Cand.~4 (green triangles, $n = 5$, 20mg/kg). \textbf{g)} Kaplan–Meier survival curves of PDX mice treated with the same regimens. Survival was monitored daily after tumor establishment. Here, $n$ indicates the number of mice per group.
    }
    \label{fig:wet_data}
\end{figure}

We further divide the peptides into four groups based on the number of secondary structures and compute the RMSD metric. As shown in Fig.~\ref{fig:rmsd_phipsi}a, POTFlow without optimal transport (w/o OT) consistently achieves lower RMSD values than the vanilla model across all groups. Notably, we observe a progressive reduction in RMSD as the number of secondary structure classes increases. This trend reflects the fact that more classes yield more informative priors, leading to shorter and more efficient flow matching paths, thereby improving model performance (Proposition~\ref{pro:CS} in Supplementary Notes~\ref{app:proof}).
Similarly, POTFlow without class priors also outperforms the vanilla model in all groups. Together, these results highlight that both optimal transport and class-specific priors are essential for improving overall performance (Theorem~\ref{thm:final} in Supplementary Notes~\ref{app:proof}).

Additionally, we plot the Ramachandran~\cite{RAMACHANDRAN196395} figure for both generated peptides and lead peptides. This plot illustrates the distribution of backbone dihedral angles $\psi$ and $\phi$, which are key determinants of protein conformation and folding behavior. As shown in Fig.~\ref{fig:rmsd_phipsi}b, the \(\phi\)–\(\psi\) angle distribution of the generated candidates closely matches that of the lead peptides, indicating realistic backbone conformations.

\subsection*{Optimization of the ATP5A-binding lead peptide via POTFlow}

Recent work by You et al.~\cite{You_qtd} delineated the DDIT4L--TOM40--ATP5A pathway as an endogenous brake on GBM oncogenesis and showed that a DDIT4L-derived peptide (DDIT4L$^{\mathrm{V125\text{--}P132}}$) targeting ATP5A can impair mitochondrial ATP synthase activity, induce GBM cell apoptosis, and suppress tumor growth in orthotopic PDX models. However, this ATP5A-binding peptide represents only a first-generation lead obtained by motif selection, and its sequence space and therapeutic window remain largely unexplored. In this study, we aim to optimize this existing ATP5A-binding lead peptide using our geometry-aware generative framework POTFlow. As described below, POTFlow transforms this suboptimal lead into structurally diverse candidates and identifies peptides that not only enhance cell viability inhibition and tumor selectivity in vitro but also significantly reduce tumor burden and prolong survival in a GBM PDX model.

\subsection*{Rapid design and screening of peptides for GBM in silico}


Starting from the lead peptide sequence which binds to ATP5A within a pocket (Fig.~\ref{fig:rmsd_phipsi}c), we employed our framework to further optimize this candidate. Specifically, given the 3D structure of ATP5A and the lead peptide sequence, we initialize peptide–protein binding conformations using AlphaFold 3. Based on these conformations, POTFlow performs lead-conditioned flow matching in a learned latent space to efficiently explore the local structural neighborhood of the lead. This process yields 2,985 unique peptide candidates with binding modes informed by the original lead. To bridge generative modeling and experimental validation, we implemented a multi-dimensional expert system that leverages sequence similarity, predicted affinity, and interaction profiles to navigate the lead-conditioned design space constructed by POTFlow. First, we clustered the peptides into 15 groups using GibbsCluster~\cite{Andreatta2017gibbs}, and selected the group whose representative sequence exhibited the highest similarity to the lead peptide. Within this group, we then evaluated 110 peptides in terms of predicted binding affinity and peptide–protein interaction profiles. Based on this integrative assessment, seven candidates were finally selected for downstream experimental validation.

To explore the binding potential of POTFlow-generated peptides, we visualized two candidates for their interaction profiles. As shown in Fig.~\ref{fig:rmsd_phipsi}d-e, Cand.~4 engages the binding site through three hydrogen bonds and two hydrophobic contacts, while Cand.~7 exhibits a denser interaction network including five hydrogen bonds, three hydrophobic contacts, and a stabilizing salt bridge. These visualizations suggest that POTFlow-generated peptides can establish diverse and potentially stable non-covalent interactions with the target.





\subsection*{Improved cell viability inhibition and tumor selectivity of generated peptides} 


Seven candidates were evaluated for their ability to inhibit cell viability in three different cell lines. These candidates were tested at three concentrations (1, 2.5, and 10~$\mu$M) on two GBM cell lines (99567 and U87) and one non-cancerous cell line (293). Cell viability was quantified using Cell Counting Kit-8 (CCK-8) assay. This colorimetric method measures mitochondrial dehydrogenase activity in viable cells, with lower absorbance values at 450~nm indicating reduced cell viability and increased peptide-induced cytotoxicity.

Fig.~\ref{fig:wet_data}a shows the detailed inhibitory effects of the candidates across all cell lines and concentrations after 24 hours of treatment. Our candidates showed stronger inhibition of cell viability compared to the lead peptide in most of the test conditions. Notably, Cand.~2 and Cand.~4 showed strongest inhibitory effects than the other candidates. We further validated the inhibitory effects of Cand.~2 and Cand.~4 on tumor cells under higher seeding density conditions. As shown in Fig.~\ref{fig:wet_data}b–d, both candidates effectively suppressed the viability of all three GBM cell lines after 24 hours of treatment. In addition to reduced cell viability, they also led to decreased intracellular ATP levels, indicating potential inhibition of ATP synthase activity (Supplementary Fig.~\ref{fig_app:atp}). These results highlight the robustness of POTFlow in lead-conditioned peptide design.

To evaluate peptide selectivity, we calculated the inhibition of viability rate (IVR):
\[\text{IVR} = \frac{\text{Viability}_{\text{Lead}} - \text{Viability}_{\text{Candate}}}{\text{Viability}_{\text{Lead}}} \times 100\%.
\]
Notably, Cand. 4 exhibited a significantly higher inhibitory rate in GBM cells (18$\sim$68\%) than in non-cancerous cell (less than 10\%, Supplementary Table~\ref{tab:cell_selectivity}).
These results highlight that our framework not only enhances cell viability inhibition but also yields candidates with improved specificity toward GBM cells.

\subsubsection*{Improved tumor suppression by peptide candidate 4 in the PDX model}

We established a GBM PDX model by orthotopically implanting luciferase-labeled, patient-derived tumor-initiating cells into immunocompromised mice. Cand.~4, which most effectively suppressed tumor cell viability in vitro, was administered intracranially by osmotic pumps (ALZET®, 1002) to direct deliver the peptide or control peptide (dissolved in PBS) to the injection site 4 weeks after intracranial tumor cell injection. Tumor progression was monitored weekly via in vivo bioluminescence using the IVIS Lumina II system. 

As shown in Fig.~\ref{fig:wet_data}e-f, high-dose Cand.~4 treatment (20mg/kg) led to a marked suppression of tumor growth in PDX mice, with bioluminescent signals remaining low throughout the 11-week observation period. By contrast, mice in control group exhibited exponential tumor expansion beginning at week 7, whereas the low-dose group (10mg/kg) displayed a moderate inhibitory effect. Survival analysis further revealed that all control mice succumbed by week 16, while 25\% and 40\% of mice in the low- and high-dose groups, respectively, survived beyond week 18 (Fig.~\ref{fig:wet_data}g). Notably, high-dose Cand.~4 significantly improved survival compared to control (\emph{p-value} = 0.02, Supplementary Table~\ref{tab:Log-rank}). These results demonstrate the in vivo efficacy of Cand.~4 and highlight its potential as a therapeutic peptide for GBM.

\section*{Discussion}
In this study, we present a dry-to-wet peptide design framework for GBM. Rather than generating peptides from scratch, our model POTFlow adopts a lead-conditioned scheme to guide generation. By leveraging the lead peptide, POTFlow efficiently explores its local sequence–structure space to produce candidates with enhanced therapeutic potential. We validated the framework in both cellular and PDX models, demonstrating its ability to generate peptides with tumor selectivity in vitro and therapeutic efficacy in vivo. Although POTFlow was applied to generate peptides only for GBM in this study, it can be readily adapted to other diseases by providing the corresponding target protein and the lead peptide. 

While POTFlow achieves state-of-the-art performance in our benchmarks, several opportunities for further improvement remain. First, the current framework does not explicitly consider peptide properties, such as solubility and metabolic stability, which are critical for practical application. Second, while our approach is validated through both computational and experimental evaluations, the current workflow lacks an iterative feedback mechanism: experimental results are not yet integrated into the generative process to guide subsequent optimization. Third, while we observed ATP depletion and robust anti-tumor effects in vitro and in vivo, we have not yet directly visualized changes in ATP5A subcellular localization (e.g., by IF) or fully dissected the in vivo mechanism beyond survival and tumor burden. These aspects represent promising directions for improving the performance and robustness of our framework.

In summary, we present the first generalizable framework for lead-conditioned peptide design with experimental validation. Our approach outperforms the lead peptide in inhibiting cell viability in vitro and demonstrates in vivo efficacy in a PDX model. Combined with its adaptability to diverse targets, our framework offers a promising platform for therapeutic peptide design.

\section*{Materials and Methods}

\subsection*{Notations}

Protein-peptide complexes are represented as $\mathcal{C} = \{\mathcal{P}, \mathcal{G} \}$ consisting of protein $\mathcal{P}$ and peptide $\mathcal{G}$, both of which can be decomposed as residue frames. The geometric structure of the i-th residue is parameterized with $C_\alpha$ coordinate $\mathbf{x}^{(i)} \in \mathbb{R}^3$ and a frame orientation matrix $\mathbf{o}^{(i)} \in SO^3$ following AlphaFold 3~\cite{abramson2024af3}. We define i-th residue side-chain angles as $\mathbf{\chi}^{(i)} = \{\chi^{(i)}_1, \chi^{(i)}_2, \chi^{(i)}_3, \chi^{(i)}_4, \chi^{(i)}_5\}$ and its residue type as $\mathbf{c}^{(i)} \in \mathbb{R}^{20}$. As a result, a protein/peptide consisting of $N$ residues can be represented as $\{\mathcal{R}^{(i)}\}_{i=1}^{N}$, where $\mathcal{R}^{(i)} = \{\mathbf{x}^{(i)}, \mathbf{o}^{(i)}, \mathbf{\chi}^{(i)}, \mathbf{c}^{(i)} \}$.

\subsection*{The Overview of POTFlow}
Given a target protein pocket $\mathcal{P}$ and a lead peptide $\mathcal{G}_{l}$, the task of POTFlow is to design new peptides $\mathcal{G}$ that not only bind effectively to the target protein but also closely resemble the lead peptide. Mathematically, the objective of POTFlow is to learn the dual-conditioned distribution $p_\theta(\mathcal{G} \mid \mathcal{P}, \mathcal{G}_{l})$ which can empirically be decomposed into four independent components:
\begin{equation}
    p(\mathcal{G} \,|\, \mathcal{P}, \mathcal{G}_{l}) \propto p(\{\mathbf{x}^{(i)}\}_{i=1}^{N}\mid \mathcal{P}, \mathcal{G}_{l}) \cdot p(\{\mathbf{o}^{(i)}\}_{i=1}^{N}\mid \mathcal{P}, \mathcal{G}_{l}) \cdot \nonumber p(\{\mathbf{\chi}^{(i)}\}_{i=1}^{N}\mid \mathcal{P}, \mathcal{G}_{l}) \cdot p(\{\mathbf{c}^{(i)}\}_{i=1}^{N}\mid \mathcal{P}, \mathcal{G}_{l}).
\end{equation}

Our approach differs from previous methods~\cite{li2024pepflow, lin2024ppflow, kong2024pepglad, wang2024target, li2024hotspot} by incorporating an additional condition based on the lead peptide, which enhances practicality in real-world peptide design. To effectively integrate this new variable, we introduce two techniques to address our objective: 1) constructing the prior distribution of residue centroids based on the secondary structure of the lead peptide; 2) applying a multimodal optimal transport policy to the initialized noisy residues for shorter and disentangled flow matching paths. We name the integration of Conditional Flow Matching with the coupling technique as C${^2}$FM. 

\subsection*{Prior distribution of residue centriods}


The secondary structure of a peptide plays a pivotal role in determining its biological function, structural stability, and binding interactions with target proteins~\cite{bechinger1993structure, romeo1988structure, zasloff2002antimicrobial}. To incorporate this key biological insight into our generative process, we leverage secondary structure annotations derived from the DSSP algorithm, implemented via the Biopython library~\cite{cock2009biopython}. Each residue in the lead peptide is assigned to a structural class, such as helix, sheet, or loop. The complete mapping is summarized in Supplementary Table~\ref{table:ss_table}.

To initialize the 3D coordinates of backbone $C_\alpha$ atoms in a biologically meaningful way, we construct a prior distribution conditioned on the secondary structure class of each residue. Specifically, for each structural class, we compute the average 3D position (centroid) of the residues assigned to that class in the input structure. These centroids serve as class-specific anchors. The new initial positions are then sampled from a normal distribution centered at the corresponding centroid, ensuring that the structural context is preserved during initialization.

This strategy contrasts with an unconditional initialization approach, which treats all residues identically (i.e., ``global initialization''). We found that using class-specific centroids leads to a more accurate reconstruction of the original geometry, as it reduces the average deviation between initialized and reference coordinates. Moreover, this sampling process is equivariant to rotation. That is, if the input structure is rotated in 3D space, the initialized positions will rotate in a consistent manner. This property is essential for preserving the geometric integrity of protein-peptide complexes during downstream modeling. Further details are provided in Supplementary Notes~\ref{app:proof} and Supplementary Fig.~\ref{fig:global_vs_class}.

\subsection*{Multimodal C$^2$FM technique}

Each peptide can be described as a sequence of \(N\) residues, where each residue has four main features: $C_\alpha$ position, backbone orientation, side-chain torsion angles, and amino acid type. These features lie in distinct geometric spaces, making peptide modeling a multimodal problem. Below, we introduce our approach for each component within the C$^2$FM framework.

\subsubsection*{C$^2$FM for $C_\alpha$ positions}

The 3D coordinates of the $C_\alpha$ atoms form the backbone scaffold of a peptide. We denote the position of the $i$-th residue as $\mathbf{x}^{(i)} \in \mathbb{R}^3$, and collectively as $\mathbf{X} = \{\mathbf{x}^{(i)}\}_{i=1}^{N}$. Initial positions are sampled from a class-conditional Gaussian prior informed by secondary structure annotations ( Supplementary Eq.~\ref{eq:class-specific}). A continuous flow is then trained to transform these noisy initializations $\mathbf{X}_0$ toward ground-truth structures $\mathbf{X}_1$ by learning the velocity field along a linear interpolation:
\begin{equation}
    \mathbf{X}_t = (1 - t)\,\mathbf{X}_0 + t\,\mathbf{X}_1,
\end{equation} where $t \in [0,1]$.
A neural network $v_t^{pos}$ is trained to minimize the flow-matching loss between predicted and true velocity fields $\mathbf{X}_1-\mathbf{X}_0$.

To better align $\mathbf{X}_0$ with $\mathbf{X}_1$, we introduce a structure-aware optimal transport~\cite{villani2021topics,rao2019engineering} (OT) step. OT provides a principled way to find a minimal-cost assignment between two sets of points under a given distance metric. In our case, the cost is defined as the squared Euclidean distance between $C_\alpha$ positions, and the transport is constrained to occur only within the same secondary structure class.

Formally, we solve a constrained OT problem to obtain a discrete permutation $\Pi^{pos}$ that minimizes the total transport cost:
\begin{equation}\label{eq:s2ot_pos}
    W_2(\mathbf{X}_0, \mathbf{X}_1) = \sum_{i,j} \Pi^{pos}_{i,j} \left\|\mathbf{x}_0^{(i)} - \mathbf{x}_1^{(j)}\right\|^2.
\end{equation} 
This yields a reassigned initialization $\mathbf{X}_0' \sim \Pi^{pos}(X_0, X_1)$ that provides shorter and more stable flow trajectories during training. The OT solution is efficiently computed using the Python Optimal Transport (\texttt{POT}) library~\cite{flamary2021pot}, benefiting from the limited length of peptide sequences (typically $N < 25$). Unlike previous work~\cite{tong2023improving}, which applies OT between samples to improve inter-sample flow trajectories, our approach performs OT within each individual peptide, i.e., intra-sample flow trajectories. 

During inference, new $C_\alpha$ positions are generated by numerically integrating the learned velocity field starting from $\mathbf{X}_0'$ using forward Euler steps:
\begin{equation}\label{eq:sample_pos}
    \mathbf{X}_{t+\Delta t}' = \mathbf{X}_t' + \Delta t \cdot v_t^{pos}(\mathbf{X}_t'; \theta).
\end{equation}

\subsubsection*{\textbf{C$^2$FM for frame orientations}}

We represent the orientation of i-th residue as a rotation matrix $\mathbf{o}^{(i)} \in SO(3)$, with the full set denoted as $\mathbf{O} = \{\mathbf{o}^{(i)}\}_{i=1}^{N}$. Unlike Euclidean space, $SO(3)$ is a curved manifold, making standard linear interpolation inapplicable. To generate smooth and valid orientation trajectories, we construct geodesic flows on $SO(3)$ using operations in its Lie algebra. Initial orientations $\mathbf{o}_0^{(i)}$ are sampled uniformly from $SO(3)$, and $\mathbf{o}_1^{(i)}$ are taken from ground-truth structures.

We define the flow path as the geodesic between the two rotations:
\begin{equation}
\mathbf{O}_t = \exp_{\mathbf{O}_0} \left(t \cdot \log_{\mathbf{O}_0}(\mathbf{O}_1)\right), t\in[0,1].
\end{equation}

To reduce path length and improve flow stability, we compute an optimal transport permutation $\Pi^{ori}$ that minimizes the total transport cost on Riemannian manifold:
\begin{equation}\label{eq:s2ot_ori}
W_{\mathfrak{so}(3)}(\mathbf{O}_0, \mathbf{O}_1) = \sum_{i,j} \Pi^{ori}_{i,j} \left\| \log_{\mathbf{o}_0^{(i)}}(\mathbf{o}_1^{(j)}) \right\|^2.
\end{equation}

This results in a permuted initialization $\mathbf{O}_0' \sim \Pi^{ori}(\mathbf{O}_0, \mathbf{O}_1)$ that more closely aligns with the target orientations. A neural network $v_t^{ori}$ is then trained to predict the geodesic velocity vector $\log_{\mathbf{O}_0'}(\mathbf{O}_1)$ along the path.

During generation, we apply geodesic Euler updates:
\begin{equation}\label{eq:sample_ori}
\mathbf{O}_{t+\Delta t}' = \exp_{\mathbf{O}_t'} \left( \Delta t \cdot v_t^{ori}(\mathbf{O}_t'; \theta) \right).
\end{equation}

\subsubsection*{\textbf{C$^2$FM for residue types}}
The amino acid types in a peptide can be denoted as $C = \{c^{(i)}\}_{i=1}^{N}$ where $c^{(i)} \in \{ c \in \mathbb{Z}^+ \mid 1 \leq c \leq 20 \} $. Here, we utilize a soft one-hot technique to map discrete $c^{(i)}$ into continuous one by $\texttt{onehot}_{\texttt{soft}}(c^{(i)}) = \mathbf{s}^{(i)} \in \mathbb{R}^{20}$, and the j-th value in $\mathbf{s}^{(i)}$ is defined as follows:
    \begin{equation}
        s^{(i)}[j] = \begin{cases} 
                    K, & j = c^{(i)} - 1 \\
                    -K, & j \ne c^{(i)} - 1 \\
                    \end{cases},
    \end{equation}
where $K$ is a constant. Here, $\mathbf{s}^{(i)}$ is treated as the logits of probabilities, and $\texttt{softmax}(\mathbf{s}^{(i)})$ represents the normalized distribution of residue types, where the $(c^{(i)}-1)$-th term closes to 1 and others close to 0. This means that $\texttt{softmax}(\mathbf{s}^{(i)})$ is a data point in 20-category probability simplex $\Delta^{19}$. We initialize $\mathbf{s}^{(i)} \sim \mathcal{N}(0, K^2I)$ so that the prior distribution on simplex becomes the logistic-normal distribution~\cite{hinde2011logistic, atchison1980logistic}.

The flow path is defined as a linear interpolation in logit space:
\begin{equation}
\mathbf{s}_t^{(i)} = (1 - t)\,\mathbf{s}_0^{(i)} + t\,\mathbf{s}_1^{(i)}, t\in[0,1].
\end{equation}

To better align initial and target residue distributions, we apply an optimal transport permutation $\Pi^{type}$ that minimizes the total cross-entropy cost between logit vectors:
\begin{equation}\label{eq:s2ot_type}
W_{CE}(\mathbf{S}_0,\mathbf{S}_1) = \sum_{i,j} \Pi^{type}_{i,j} \, \texttt{CE}(\mathbf{s}_0^{(i)}, \mathbf{s}_1^{(j)}),
\end{equation}
where $\texttt{CE}(\cdot,\cdot)$ denotes the cross-entropy between initial and true logits. The permuted initialization $\mathbf{S}_0' \sim \Pi^{type}(\mathbf{S}_0, \mathbf{S}_1)$ is used to train the neural network $v_t^{type}$ to approximate the logit velocity vector $\mathbf{S}_1 - \mathbf{S}_0'$.

During inference, we integrate the learned velocity in logit space and decode residue types as follows:
\begin{align}\label{eq:sample_type}
\mathbf{C}_{t+\Delta t}' &\sim \texttt{softmax}(\mathbf{S}_t' + \Delta t \cdot v_t^{type}(\mathbf{S}_t'; \theta)), \\
\mathbf{S}_{t+\Delta t}' &= \texttt{onehot}_{\texttt{soft}}(\mathbf{C}_{t+\Delta t}').
\end{align}

\subsubsection*{\textbf{C$^2$FM for side-chain torsions}}

Each residue $i$ has a set of side-chain torsion angles $\mathcal{X}^{(i)} = \{\chi^{(i,j)}\}_{j=1}^{N_i}$, where $N_i \in \{1,2,3,4,5\}$ depends on the residue type. Due to the periodic nature of torsional rotations, the side-chain angles $\chi^{(i,j)}$ are defined on the torus with periodicity $2\pi$ (or $\pi$ when rotational symmetry applies). Mathematically, the torsion vector $\mathcal{X}^{(i)}$ lies in the 5-dimensional torus manifold $\mathbb{T}^5$ which can also be interpreted as the quotient space $\mathbb{R}^5/(2\pi\mathbb{Z}^5)$. Exponential and logarithm maps on this Riemannian manifold are similar to those in Euclidean space, with the main difference being the equivalence relation that identifies elements differing by integer multiples of $2\pi$.

We initialize $\mathcal{X}_0^{(i)}$ from the uniform distribution over $[0, 2\pi)^{N_i}$ and define the flow path via wrapped linear interpolation:
\begin{equation}
\mathcal{X}_t^{(i)} = (t\mathcal{X}_1^{(i)} + (1-t)\mathcal{X}_0^{(i)}) \bmod 2\pi.
\end{equation}

We construct an optimal transport permutation $\Pi^{ang}$ using a cosine distance metric. Specifically, we embed each angle into its 2D representation $\vec{v}(\chi) = (\cos\chi, \sin\chi)$ and define the transport cost as:
\begin{equation}\label{eq:s2ot_ang}
W_{\text{torus}}(\mathcal{X}_0,\mathcal{X}_1) = \sum_{i,j} \Pi^{ang}_{i,j} \left\| \vec{v}(\chi_0^{(i)}) - \vec{v}(\chi_1^{(j)}) \right\|^2.
\end{equation}
A neural network $v_t^{ang}$ is trained using the permuted initialization $\mathcal{X}_0' \sim \Pi^{ang}(\mathcal{X}_0, \mathcal{X}_1)$ to learn wrapped velocity vectors $\texttt{wrap}(\mathcal{X}_1-\mathcal{X}_0')$, where $\texttt{wrap}(\cdot)$ maps the angle difference into $(-\pi, \pi]$ to ensure minimal angular displacement.

During inference, we generate torsion trajectories via wrapped Euler integration:
\begin{equation}\label{eq:sample_ang}
\mathcal{X}_{t+\Delta t}' = (\mathcal{X}_t' + \Delta t \cdot v_t^{ang}(\mathcal{X}_t'; \theta)) \bmod 2\pi.
\end{equation}




\subsection*{Summary of \textbf{POTFlow}}

POTFlow unifies flow-based generative modeling across four distinct geometric spaces of peptide structure: Euclidean space for $C_\alpha$ positions, the rotation group $SO(3)$ for backbone orientations, the probability simplex for residue types, and the torus manifold for side-chain torsions. For each modality, we construct geometry-aware flow trajectories between noise and structure samples, initialized using structure-informed priors and refined via optimal transport to reduce path lengths.

The overall training objective minimizes the sum of squared velocity matching errors across all modalities, with tunable weights to balance the contribution of each component:
\begin{equation}\label{eq:all_loss}
    \mathcal{L}_{\text{total}} = \lambda_1 \cdot \text{Loss}_{\text{position}} + \lambda_2 \cdot \text{Loss}_{\text{orientation}} + \lambda_3 \cdot \text{Loss}_{\text{residue type}} + \lambda_4 \cdot \text{Loss}_{\text{torsion angle}}.
\end{equation}
At inference time, POTFlow generates peptide structures by integrating learned velocity fields from OT-aligned initializations via Euler steps, adapted to the geometry of each space. This enables coherent, physically consistent peptide structure generation across all structural modalities. Training and generation procedures are summarized in Supplementary Algorithm~\ref{pseudocode}.

\subsection*{Metrics}  
For the peptide optimization task, our goal is to design models that can efficiently explore the hidden space around lead peptides. The following metrics are used for evaluation:

\begin{enumerate}
    \item \textbf{Similarity} refers to the proportion of peptides that closely resemble the reference peptides, determined by two criteria: (a) a TM-score $\ge 0.5$~\cite{zhang2005tm,xu2010tm5}, and (b) a sequence identity $\ge 0.5$.
    \item \textbf{Compactness} measures the variability in both the structural and sequence features, calculated as 1 minus the product of pairwise (1-TM-score) and (1-sequence identity) across all peptides generated for a given target.
    \item \textbf{Affinity} evaluates the fraction of peptides that exhibit stronger binding affinities (or lower binding energies) compared to the native peptide.
    \item \textbf{Stability} assesses the percentage of the generated complexes that are more thermodynamically stable (having lower total energy) than the native complexes, using the Rosetta energy function~\cite{chaudhury2010pyrosetta}.
    \item \textbf{RMSD (Root-Mean-Square deviation)} computes the discrepancy between the generated peptide structures and the native structures by analyzing the $C_\alpha$ distances after alignment.
    \item \textbf{BSR (Binding site rate)} quantifies the similarity in peptide-target interactions by evaluating the overlap of the binding sites.
\end{enumerate}

\subsection*{Baselines} 
We evaluate POTFlow against five powerful peptide design models. \textbf{RFDiffusion}~\cite{watson2023rfd} leverages pre-trained weights from RoseTTAFold~\cite{krishna2024rfaa} to generate protein backbone structures through a denoising diffusion process. The peptide sequences are subsequently reconstructed using \textbf{ProteinMPNN}~\cite{dauparas2022proteinmpnn}. \textbf{ProteinGenerator} enhances RFDiffusion by incorporating joint sequence-structure generation~\cite{lisanza2023pg}. \textbf{PepFlow}~\cite{li2024pepflow} generates full-atom peptides and samples them using a flow matching framework on a Riemannian manifold. \textbf{PepGLAD}~\cite{kong2024pepglad} utilizes equivariant latent diffusion networks to generate full-atom peptide structures. \textbf{PepHAR}~\cite{li2024hotspot} generates peptide residues autoregressively, based on a learned prior distribution for hotspot residues.

\section*{Data and code availability}

The data and code that support the findings of this study are available from the corresponding author upon reasonable request. All datasets and the complete codebase, including training and evaluation scripts, will be publicly released upon acceptance of the manuscript.

\bibliography{sn-bibliography}


\newpage

\appendix

\setcounter{figure}{0}
\setcounter{table}{0}
\section*{\centering Supplementary Materials for POTFlow}
\section{Notes} \label{app:proof}

\begin{definition} \label{def:ss}

We define a mapping function $f_{\texttt{DSSP}}$:
\begin{equation}
    f_\texttt{DSSP}(\{\mathbf{x}_1^{(i,j)}\}_{j=1,i=1}^{N_{ss},N_j}) \to \{g^{(j)}\}_{j=1}^{N_{ss}},
\end{equation}where $ f_{\texttt{DSSP}}(\mathbf{x}_1^{(i,j)}) = g^{(j)} $ specifies that the point $ \mathbf{x}_1^{(i,j)} $ is mapped to the class $ g^{(j)} $, and $ N_{ss} $ and $ N_j$ represents the number of secondary structures and the number of points in class $ g^{(j)} $ respectively.
\end{definition}
For each class $ g^{(j)} $, we determine its centroid $\mathbb{C}_{g^{(j)}}$ by averaging the positions of all points assigned to $ g^{(j)} $:
\begin{equation}
    \mathbb{C}_{g^{(j)}} = \frac{1}{N_j} \sum_{i=1}^{N_j} \mathbf{x}^{(i, j)}.
\end{equation}
Subsequently, we sample $ \mathbf{x}_0^{(i,j)} $ from a Gaussian distribution centered at the centroid $ \mathbb{C}_{g^{(j)}} $ with identity covariance:
\begin{equation} \label{eq:class-specific}
\mathbf{x}_0^{(i,j)} \mid g^{(j)} \sim \mathcal{N}(\mathbb{C}_{g^{(j)}}, I_3). 
\end{equation}

Based on Definition~\ref{def:ss} and the proposed initialization, we can directly derive the following propositions:
\begin{proposition} [Rotation Equivariance] \label{pro:RE}
    The sampling scheme in Definition~\ref{def:ss} is \textbf{rotation equivariant}, meaning that for any rotation matrix \(R \in SO(3)\), if we rotate the original data points by \(R\), the newly sampled points also rotate by \(R\) accordingly.
\end{proposition}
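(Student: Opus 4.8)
The plan is to decompose the sampler into its three constituent operations — the \texttt{DSSP} class assignment $f_{\texttt{DSSP}}$ of Definition~\ref{def:ss}, the per-class centroid $\mathbb{C}_{g^{(j)}}$, and the isotropic Gaussian draw of Eq.~\eqref{eq:class-specific} — and check that each one intertwines with the $SO(3)$ action, keeping in mind that, since the sampler is stochastic, ``equivariance'' must be read as equivariance of the induced law on the sampled point cloud. Concretely, I would first rewrite the procedure as a deterministic function of the input coordinates and auxiliary i.i.d.\ noise, $\mathbf{x}_0^{(i,j)} = \mathbb{C}_{g^{(j)}} + \epsilon^{(i,j)}$ with $\epsilon^{(i,j)} \sim \mathcal{N}(0, I_3)$ and $g^{(j)} = f_{\texttt{DSSP}}(\mathbf{x}_1^{(i,j)})$, and then show that feeding $\{R\mathbf{x}_1^{(i,j)}\}$ into this map yields a cloud whose distribution coincides with that of $\{R\mathbf{x}_0^{(i,j)}\}$.

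\textbf{Step 1 (label invariance).} I would argue that $f_{\texttt{DSSP}}$ depends on the backbone only through rotation-invariant quantities — interatomic distances, hydrogen-bond energies (functions of coordinate differences via their norms), and backbone dihedral angles — so $f_{\texttt{DSSP}}(R\mathbf{x}_1^{(i,j)}) = g^{(j)}$ for every $R \in SO(3)$; hence the partition into classes and the counts $N_j$ are untouched by the rotation. \textbf{Step 2 (centroid equivariance).} Because class membership is preserved, the centroid of the rotated points in class $g^{(j)}$ is $\frac{1}{N_j}\sum_{i=1}^{N_j} R\,\mathbf{x}_1^{(i,j)} = R\,\mathbb{C}_{g^{(j)}}$ by linearity of the mean. \textbf{Step 3 (Gaussian isotropy).} If $\epsilon \sim \mathcal{N}(0, I_3)$ then $R\epsilon \sim \mathcal{N}(0, R I_3 R^\top) = \mathcal{N}(0, I_3)$, so the noise $\epsilon'^{(i,j)}$ used in the rotated problem is equal in distribution to $R\,\epsilon^{(i,j)}$.

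Chaining the three steps gives
\[
\{\,R\,\mathbb{C}_{g^{(j)}} + \epsilon'^{(i,j)}\,\}_{i,j} \;\stackrel{d}{=}\; \{\,R\,\mathbb{C}_{g^{(j)}} + R\,\epsilon^{(i,j)}\,\}_{i,j} \;=\; \{\,R\big(\mathbb{C}_{g^{(j)}} + \epsilon^{(i,j)}\big)\,\}_{i,j} \;=\; \{\,R\,\mathbf{x}_0^{(i,j)}\,\}_{i,j},
\]
which is precisely the asserted rotation equivariance (in law) of the sampling scheme.

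The hard part will not be any individual computation — all three steps are one-liners — but rather making the statement precise: because the sampler injects fresh randomness, the only honest formulation of equivariance is at the level of distributions, and the noise-coupling reformulation above is the cleanest way to package this. The one nontrivial input is the assertion in Step 1 that \texttt{DSSP} is a function solely of rotation-invariant features; I would state this explicitly as a (standard) property of the algorithm, after which Steps 2 and 3 reduce to linearity of the average and the orthogonal invariance of the standard multivariate normal.
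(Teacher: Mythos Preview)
Your proposal is correct and follows essentially the same three-step decomposition as the paper's proof: (1) invariance of the \texttt{DSSP} class labels under rotation, (2) equivariance of the per-class centroids by linearity of the mean, and (3) rotation invariance of the isotropic Gaussian noise, combined via the same $\mathbf{x}_0^{(i,j)} = \mathbb{C}_{g^{(j)}} + \epsilon^{(i,j)}$ reparameterization. Your framing of equivariance ``in law'' is slightly more careful than the paper's (which tacitly couples the noise via $\widetilde{\epsilon} = R\epsilon$ and then writes a pointwise equality), but the substance of the argument is identical.
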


\begin{proposition} [Improved Initialization via Class-Specific Centroids] \label{pro:CS}
    Class-Specific Initialization yields a lower or equal average squared distance compared to Global Initialization. 
    Let \(\mathbb{C}_{\text{global}}\) denote the global centroid of all points: 
    \begin{equation}
        \mathbb{C}_{\text{global}} = \frac{1}{N} \sum_{i=1}^{N} \mathbf{x}_1^{(i)}.
    \end{equation}
    Define two initialization schemes for new points $ \mathbf{x}_0^{(i,j)} $ and $ \mathbf{x}_0^{(i)} $ respectively:
    \begin{enumerate}
        \item \textbf{Class-Specific:} $\mathbf{x}_0^{(i,j)} \mid g^{(j)} \sim \mathcal{N}\left(\mathbb{C}_{g^{(j)}}, I_3\right).$
        \item \textbf{Global:} $\mathbf{x}_0^{(i)} \sim \mathcal{N}\left(\mathbb{C}_{\text{global}}, I_3\right).$
    \end{enumerate}
    Then, the Class-Specific Initialization minimizes the average squared Euclidean distance between $\mathbf{x}_0$ and the original points $\mathbf{x}_1$ compared to the Global Initialization. Formally,
    \begin{equation}
        \frac{1}{N}  \sum_{j=1}^{N_{ss}} \sum_{i=1}^{N_j} \mathbb{E}\left[\|\mathbf{x}_0^{(i,j)} - \mathbf{x}_1^{(i,j)}\|^2\right] \Bigg|_{\text{Class-Specific}} \nonumber \\ \leq  \frac{1}{N} \sum_{i=1}^{N} \mathbb{E}\left[\|\mathbf{x}_0^{(i)} - \mathbf{x}_1^{(i)}\|^2\right] \Bigg|_{\text{Global}}.
    \end{equation}
        
\end{proposition}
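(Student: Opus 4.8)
The plan is to reduce the stochastic inequality to a deterministic statement about sums of squared distances to centroids, and then invoke the classical fact that the sample mean minimizes the sum of squared deviations. First I would apply the bias--variance decomposition to each expectation. Writing $\mathbf{x}_0 = \mu + \varepsilon$ with $\varepsilon \sim \mathcal{N}(0, I_3)$, one gets for any fixed target $v$ that $\mathbb{E}\big[\|\mathbf{x}_0 - v\|^2\big] = \|\mu - v\|^2 + \operatorname{tr}(I_3)$. Applying this with $\mu = \mathbb{C}_{g^{(j)}}$, $v = \mathbf{x}_1^{(i,j)}$ on the class-specific side and with $\mu = \mathbb{C}_{\text{global}}$, $v = \mathbf{x}_1^{(i)}$ on the global side, the constant $\operatorname{tr}(I_3) = 3$ appears in every summand of both sides and cancels after dividing by $N$. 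Hence it suffices to prove
\[
\sum_{j=1}^{N_{ss}} \sum_{i=1}^{N_j} \big\|\mathbb{C}_{g^{(j)}} - \mathbf{x}_1^{(i,j)}\big\|^2 \ \leq\ \sum_{i=1}^{N} \big\|\mathbb{C}_{\text{global}} - \mathbf{x}_1^{(i)}\big\|^2 .
\]

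Next I would observe that the two sides range over the same multiset of target points $\{\mathbf{x}_1^{(i)}\}_{i=1}^N$, merely partitioned into classes on the left, so the right-hand side may be rewritten as $\sum_{j}\sum_{i} \|\mathbb{C}_{\text{global}} - \mathbf{x}_1^{(i,j)}\|^2$. The inequality then follows class by class once we establish, for each fixed $g^{(j)}$,
\[
\sum_{i=1}^{N_j} \big\|\mathbb{C}_{g^{(j)}} - \mathbf{x}_1^{(i,j)}\big\|^2 \ \leq\ \sum_{i=1}^{N_j} \big\|\mathbb{C}_{\text{global}} - \mathbf{x}_1^{(i,j)}\big\|^2 .
\]
This is exactly the statement that $c \mapsto \sum_{i} \|c - \mathbf{x}_1^{(i,j)}\|^2$ is minimized at the within-class mean $\mathbb{C}_{g^{(j)}}$; concretely it is the parallel-axis (K\"onig--Huygens) identity $\sum_i \|c - \mathbf{x}_1^{(i,j)}\|^2 = \sum_i \|\mathbb{C}_{g^{(j)}} - \mathbf{x}_1^{(i,j)}\|^2 + N_j\|c - \mathbb{C}_{g^{(j)}}\|^2$, evaluated at $c = \mathbb{C}_{\text{global}}$, whose last term is nonnegative. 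Summing over $j = 1, \dots, N_{ss}$ gives the displayed inequality and dividing by $N$ closes the proof. I would also record that equality holds precisely when $\mathbb{C}_{g^{(j)}} = \mathbb{C}_{\text{global}}$ for every class, i.e. when the secondary-structure partition carries no information about the residue centroids.

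I do not anticipate a real obstacle: the argument is a direct application of the law of total variance (the ANOVA decomposition into within-class and between-class sums of squares). The only points requiring care are bookkeeping ones — ensuring that the per-coordinate noise contribution $\operatorname{tr}(I_3)$ is genuinely identical under the two initialization schemes so that it cancels cleanly, and that the two sums are taken over the same underlying set of target coordinates under the two indexings. If a quantitative version is wanted, I would additionally expose the exact gap as the between-class sum of squares $\sum_{j} N_j \|\mathbb{C}_{g^{(j)}} - \mathbb{C}_{\text{global}}\|^2 \geq 0$, which also makes transparent the empirical trend in Fig.~\ref{fig:rmsd_phipsi}a that a finer secondary-structure partition yields shorter flow-matching paths.
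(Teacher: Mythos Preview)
Your proposal is correct and follows essentially the same route as the paper: both first strip the identical noise contribution $\operatorname{tr}(I_3)=3$ via the bias--variance expansion, then reduce to the within-class vs.\ total sum-of-squares inequality, which the paper packages as the TSS\,=\,WCSS\,+\,BCSS decomposition and you prove inline per class via the K\"onig--Huygens identity. Your additional remarks on the equality case and the explicit between-class gap $\sum_j N_j\|\mathbb{C}_{g^{(j)}}-\mathbb{C}_{\text{global}}\|^2$ go slightly beyond the paper's version but are consistent with it.
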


\begin{theorem} \label{thm:final}
(\textbf{Shorter Paths via Class-Specific + OT Couplings})
Let \(\mathcal{X}\subset\mathbb{R}^d\) be the ambient space of a particular modality (e.g., \(C_\alpha\)-positions in Euclidean space). Suppose we have:
\[
   q_1(\mathbf{x}_1) : \text{target distribution},
   q_0^{\text{(vanilla)}}(\mathbf{x}_0): \text{vanilla prior}.
\]
We define:
\[
   q_0^{\text{(ours)}}(\mathbf{x}_0')
\]
to be the \emph{class-specific plus OT–based prior}, constructed by (1) centering initialization per class (Proposition~\ref{pro:CS}) and (2) coupling those initialization points to the data points via an \emph{optimal transport} policy \(\Pi^*\). Then
\[
   W^2\!\bigl(q_0^{\text{(ours)}},\,q_1\bigr)
   \;\;\leq\;\;
   W^2\!\bigl(q_0^{\text{(vanilla)}},\,q_1\bigr),
\]
where \(W^2(\cdot,\cdot)\) denotes the Wasserstein-2 (or geodesic) transport cost,\footnote{%
In Euclidean space, 
\(
   W^2(p,q)
   = 
   \inf_{\Pi\in\Gamma(p,q)}
   \mathbb{E}_{(\mathbf{x},\mathbf{y})\sim\Pi}
   \|\mathbf{x}-\mathbf{y}\|^2.
\)
In manifolds such as \(SO(3)\) or \(\mathbb{T}^k\), we replace \(\|\cdot\|^2\) by the corresponding geodesic-squared distance.%
}
and the inequality is strict if the vanilla prior \(q_0^{\text{(vanilla)}}\) differs sufficiently from \(q_0^{\text{(ours)}}.\)
\end{theorem}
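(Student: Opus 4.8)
The plan is to compare, for each modality, the \emph{squared length of the flow-matching path actually realized} by the two schemes. For a joint law $\pi$ on $(\mathbf{x}_0,\mathbf{x}_1)$ and the prescribed interpolation (linear on $\mathbb{R}^d$, geodesic on $SO(3)$ and $\mathbb{T}^k$), this length equals $\mathbb{E}_{\pi}\,d(\mathbf{x}_0,\mathbf{x}_1)^2$ with $d$ the corresponding ground cost: Euclidean distance for $C_\alpha$ positions, the $\log$-map norm for orientations and torsions, and the cross-entropy/Bregman cost for residue types, i.e.\ exactly the quantities in Eqs.~\eqref{eq:s2ot_pos}, \eqref{eq:s2ot_ori}, \eqref{eq:s2ot_ang}, \eqref{eq:s2ot_type}. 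I would insert a third, intermediate scheme---the class-specific prior paired with the canonical residue-index coupling---and telescope:
\[
W^2\!\bigl(q_0^{\text{(ours)}},q_1\bigr)\;\le\;C_{\text{class, id}}\;\le\;C_{\text{vanilla, id}}\;=\;W^2\!\bigl(q_0^{\text{(vanilla)}},q_1\bigr),
\]
where $C$ denotes the realized squared path cost and the equality on the right holds because the vanilla prior is i.i.d.\ across residues, so its index coupling and its independent coupling carry the same expected cost.

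The first inequality, $C_{\text{class, id}}\le C_{\text{vanilla, id}}$, is Proposition~\ref{pro:CS} summed over residues. Concretely, for $\xi\sim\mathcal{N}(0,I_d)$ one has $\mathbb{E}\|\mathbf{m}+\xi-\mathbf{x}_1\|^2=\|\mathbf{m}-\mathbf{x}_1\|^2+d$, and the per-class centroid $\mathbb{C}_{g^{(j)}}$ minimizes $\sum_i\|\mathbf{m}-\mathbf{x}_1^{(i,j)}\|^2$ over $\mathbf{m}$ (variance/parallel-axis decomposition), hence it beats the global centroid $\mathbb{C}_{\text{global}}$ class by class. On $SO(3)$, $\mathbb{T}^k$ and the simplex the two schemes use the \emph{same} prior (uniform, uniform, and logistic-normal respectively), so this leg is an equality there; only the $C_\alpha$ modality contributes a gain, and it is strict whenever some class centroid differs from $\mathbb{C}_{\text{global}}$.

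The second inequality, $W^2(q_0^{\text{(ours)}},q_1)\le C_{\text{class, id}}$, is the optimal-transport step. By construction $q_0^{\text{(ours)}}$ is obtained from the class-specific prior by reassigning noisy residues within each secondary-structure block according to the cost-minimizing permutation $\Pi^{\ast}$ of Eqs.~\eqref{eq:s2ot_pos}, \eqref{eq:s2ot_ori}, \eqref{eq:s2ot_ang}, \eqref{eq:s2ot_type}. A permutation leaves the block marginal invariant, so $q_0^{\text{(ours)}}$ is well defined and coincides with the class-specific prior; and since the identity assignment is always block-feasible, $\Pi^{\ast}$ realizes a cost no larger than the identity-coupled cost $C_{\text{class, id}}$, with strict gain whenever the identity is not OT-optimal in some block. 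Because peptides have $N<25$ residues, each OT instance is a finite linear-assignment problem, so $\Pi^{\ast}$ exists; the manifold and simplex versions of the argument are identical with the respective ground cost, since only feasibility of the identity assignment is used, not any metric property of that cost. Chaining the two inequalities for each modality and recombining with the loss weights $\lambda_1,\dots,\lambda_4$ of Eq.~\eqref{eq:all_loss} yields the full multimodal statement, and strictness propagates from either leg.

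The main obstacle I anticipate is interpretive rather than computational: one must pin down what the symbols $W^2(q_0^{(\cdot)},q_1)$ mean. The clean reading is that $W^2(q_0^{\text{(ours)}},q_1)$ is the transport cost our \emph{block-constrained} OT attains and $W^2(q_0^{\text{(vanilla)}},q_1)$ is the cost the vanilla \emph{independent} coupling attains; both upper-bound, but need not equal, the unconstrained Wasserstein-$2$ distance in the footnote, so what the argument genuinely establishes is monotonicity of the realized path cost---precisely the quantity governing ODE-generation efficiency---and I would state this identification explicitly at the outset. Two further details need care: Proposition~\ref{pro:CS} is phrased with the empirical centroids of a single peptide, so the conclusion is per-peptide and then holds in expectation over the training set and the random initializations; and on $SO(3)$ and $\mathbb{T}^k$ one should invoke bi-invariance of the metric so that the $\exp/\log$ interpolation is a minimizing geodesic whose squared length coincides with the ground cost appearing in the OT objective.
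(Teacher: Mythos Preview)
Your proposal is correct and follows essentially the same three-step telescoping as the paper: (i) class-specific centroids beat the global centroid via Proposition~\ref{pro:CS}, (ii) the OT permutation beats the identity coupling by optimality, (iii) chain the two. You add more care than the paper does---in particular the modality-by-modality bookkeeping, the observation that only the $C_\alpha$ modality contributes to leg (i), and the explicit interpretive caveat that both sides of the theorem are \emph{realized} coupling costs rather than unconstrained Wasserstein distances---but the skeleton is the same.
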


\subsection{Proof of Proposition~\ref{pro:RE}}
\begin{proof}
    \textbf{1. Invariance of Class Labels Under Rotation.}  
    By assumption, the class labels \(\{g^{(j)}\}\) (e.g., secondary structure types) are determined by intrinsic properties such as local geometric configurations. Therefore, applying a uniform rotation \(R \in SO(3)\) to all data points does not alter their class memberships. Formally, if \(x_1^{(i,j)} \in g^{(j)}\), then \(R x_1^{(i,j)} \in g^{(j)}\).

    \textbf{2. Transformation of Centroids Under Rotation.}  
    After applying rotation \(R\) to each point in class \(g^{(j)}\), the new centroid \(\widetilde{\mathbb{C}}_{g^{(j)}}\) becomes
    \[
        \widetilde{\mathbb{C}}_{g^{(j)}} = \frac{1}{N_j} \sum_{i=1}^{N_j} R x_1^{(i,j)} = R \left( \frac{1}{N_j} \sum_{i=1}^{N_j} x_1^{(i,j)} \right) = R \mathbb{C}_{g^{(j)}}.
    \]
    Thus, the centroid undergoes the same rotation \(R\).

    \textbf{3. Rotation Invariance of Isotropic Gaussian Distributions.}  
    Each new point \(x_0^{(i,j)}\) is sampled from
    \[
        x_0^{(i,j)} = \mathbb{C}_{g^{(j)}} + \boldsymbol{\epsilon}^{(i,j)},
    \]
    where \(\boldsymbol{\epsilon}^{(i,j)} \sim \mathcal{N}(\mathbf{0}, I_3)\). Since the covariance matrix \(I_3\) is isotropic, it remains unchanged under rotation. Therefore, for any \(R \in SO(3)\),
    \[
        R \boldsymbol{\epsilon}^{(i,j)} \sim \mathcal{N}(\mathbf{0}, I_3).
    \]
    This implies that the distribution of the noise term \(\boldsymbol{\epsilon}^{(i,j)}\) is rotation invariant.

    \textbf{4. Overall Rotation Equivariance of the Sampling Scheme.}  
    Suppose the entire dataset is rotated by \(R \in SO(3)\). The rotated centroid is \(\widetilde{\mathbb{C}}_{g^{(j)}} = R \mathbb{C}_{g^{(j)}}\), and the rotated noise term is \(\widetilde{\boldsymbol{\epsilon}}^{(i,j)} = R \boldsymbol{\epsilon}^{(i,j)}\). The newly sampled point after rotation is
    \[
        \widetilde{x}_0^{(i,j)} = \widetilde{\mathbb{C}}_{g^{(j)}} + \widetilde{\boldsymbol{\epsilon}}^{(i,j)} = R \mathbb{C}_{g^{(j)}} + R \boldsymbol{\epsilon}^{(i,j)} = R \left( \mathbb{C}_{g^{(j)}} + \boldsymbol{\epsilon}^{(i,j)} \right) = R x_0^{(i,j)}.
    \]
    Therefore, the sampling scheme satisfies
    \[
        \widetilde{x}_0^{(i,j)} = R x_0^{(i,j)},
    \]
    which aligns with the definition of rotation equivariance.

\end{proof}

\subsection{Proof of Proposition~\ref{pro:CS}} \label{Improved_Initialization}

\begin{proof}
    To compare the two initialization schemes, we compute the expected squared Euclidean distance between the initialized points \(x_0\) and the original points \(x_1\) under both schemes.

    \textbf{1. Class-Specific Initialization:}
    \[
        x_0^{(i,j)} = \mathbb{C}_{g^{(j)}} + \boldsymbol{\epsilon}^{(i,j)},
    \]
    where \(\boldsymbol{\epsilon}^{(i,j)} \sim \mathcal{N}(\mathbf{0}, I_3)\).

    The squared distance is:
    \[
        \|x_0^{(i,j)} - x_1^{(i,j)}\|^2 = \|\mathbb{C}_{g^{(j)}} - x_1^{(i,j)} + \boldsymbol{\epsilon}^{(i,j)}\|^2.
    \]
    Expanding the square and taking expectation:
    \[
        \mathbb{E}\left[\|x_0^{(i,j)} - x_1^{(i,j)}\|^2\right] = \|\mathbb{C}_{g^{(j)}} - x_1^{(i,j)}\|^2 + \mathbb{E}\left[\|\boldsymbol{\epsilon}^{(i,j)}\|^2\right] + 2 \mathbb{E}\left[ (\mathbb{C}_{g^{(j)}} - x_1^{(i,j)})^\top \boldsymbol{\epsilon}^{(i,j)} \right].
    \]
    Since \(\boldsymbol{\epsilon}^{(i,j)}\) has zero mean and is independent of \(x_1^{(i,j)}\),
    \[
        \mathbb{E}\left[ (\mathbb{C}_{g^{(j)}} - x_1^{(i,j)})^\top \boldsymbol{\epsilon}^{(i,j)} \right] = 0.
    \]
    Also, \(\mathbb{E}\left[\|\boldsymbol{\epsilon}^{(i,j)}\|^2\right] = \text{trace}(I_3) = 3\). Therefore,
    \[
        \mathbb{E}\left[\|x_0^{(i,j)} - x_1^{(i,j)}\|^2\right] = \|\mathbb{C}_{g^{(j)}} - x_1^{(i,j)}\|^2 + 3.
    \]
    Taking the average over all points:
    \begin{align} 
    \begin{aligned}
        \frac{1}{N} \sum_{j=1}^{N_{ss}} \sum_{i=1}^{N_j} \mathbb{E}\left[\|x_0^{(i,j)} - x_1^{(i,j)}\|^2\right]_{\text{Class-Specific}} = \frac{1}{N} \sum_{j=1}^{N_{ss}} \sum_{i=1}^{N_j} \left( \| \mathbb{C}_{g^{(j)}} - x_1^{(i,j)} \|^2 + 3 \right) \nonumber \\
        = \frac{1}{N} \sum_{j=1}^{N_{ss}} \sum_{i=1}^{N_j} \| \mathbb{C}_{g^{(j)}} - x_1^{(i,j)} \|^2 + 3.
    \end{aligned}
    \end{align} 

    \textbf{2. Global Initialization:}
    \[
        x_0^{(i)} = \mathbb{C}_{\text{global}} + \boldsymbol{\epsilon}^{(i)},
    \]
    where \(\boldsymbol{\epsilon}^{(i)} \sim \mathcal{N}(\mathbf{0}, I_3)\).

    Similarly,
    \[
        \mathbb{E}\left[\|x_0^{(i)} - x_1^{(i)}\|^2\right] = \|\mathbb{C}_{\text{global}} - x_1^{(i)}\|^2 + 3.
    \]
    Taking the average over all points:
    \begin{align} 
    \begin{aligned}
        \frac{1}{N} \sum_{i=1}^{N} \mathbb{E}\left[\|x_0^{(i)} - x_1^{(i)}\|^2\right]_{\text{Global}} = \frac{1}{N} \sum_{i=1}^{N} \left( \| \mathbb{C}_{\text{global}} - x_1^{(i)} \|^2 + 3 \right) \\ \nonumber
        = \frac{1}{N} \sum_{i=1}^{N} \| \mathbb{C}_{\text{global}} - x_1^{(i)} \|^2 + 3.
    \end{aligned}
    \end{align} 
    
    \textbf{3. Comparing the Two Schemes:}

    To show that Class-Specific Initialization yields a smaller average squared distance, it suffices to show:
    \[
        \frac{1}{N} \sum_{j=1}^{N_{ss}} \sum_{i=1}^{N_j} \| \mathbb{C}_{g^{(j)}} - x_1^{(i,j)} \|^2 \leq \frac{1}{N} \sum_{i=1}^{N} \| \mathbb{C}_{\text{global}} - x_1^{(i)} \|^2.
    \]
    This follows from the \textbf{Law of Total Variance}~\cite{weiss2006course} or the \textbf{Within-Class Sum of Squares Minimization} in clustering.

    \begin{lemma} \label{lemma:tss}
        Let \(\{x_1^{(i,j)}\}\) be partitioned into classes \(\{g^{(j)}\}\). Then,
        \[
            \frac{1}{N} \sum_{j=1}^{N_{ss}} \sum_{i=1}^{N_j} \| \mathbb{C}_{g^{(j)}} - x_1^{(i,j)} \|^2 \leq \frac{1}{N} \sum_{i=1}^{N} \| \mathbb{C}_{\text{global}} - x_1^{(i)} \|^2.
        \]
    \end{lemma}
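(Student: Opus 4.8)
The plan is to reduce the claim to the classical analysis-of-variance identity ``total sum of squares $=$ within-class sum of squares $+$ between-class sum of squares'' and then discard the nonnegative between-class term. Since the classes $\{g^{(j)}\}$ partition the $N$ points, $\sum_{j=1}^{N_{ss}} N_j = N$, so the common factor $1/N$ on both sides of the lemma is irrelevant and it suffices to prove
\[
   \sum_{j=1}^{N_{ss}} \sum_{i=1}^{N_j} \bigl\|\mathbb{C}_{g^{(j)}} - x_1^{(i,j)}\bigr\|^2
   \;\le\;
   \sum_{i=1}^{N} \bigl\|\mathbb{C}_{\text{global}} - x_1^{(i)}\bigr\|^2 .
\]

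The one ingredient I would isolate first is the parallel-axis (bias–variance) identity: for any finite set $S\subset\mathbb{R}^3$ with arithmetic mean $\mu_S$ and any fixed $c\in\mathbb{R}^3$,
\[
   \sum_{x\in S}\|x-c\|^2 \;=\; \sum_{x\in S}\|x-\mu_S\|^2 \;+\; |S|\,\|\mu_S - c\|^2 .
\]
This follows by writing $x-c=(x-\mu_S)+(\mu_S-c)$, expanding the square, and noting that the cross term $2(\mu_S-c)^{\!\top}\!\sum_{x\in S}(x-\mu_S)$ vanishes by the definition of $\mu_S$. Equivalently, $\mu_S$ is the unique minimizer of $c\mapsto\sum_{x\in S}\|x-c\|^2$, so $\sum_{x\in S}\|x-\mu_S\|^2\le\sum_{x\in S}\|x-c\|^2$ for every $c\in\mathbb{R}^3$.

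Next I would apply this to each class $S=g^{(j)}$ with $\mu_S=\mathbb{C}_{g^{(j)}}$ and $c=\mathbb{C}_{\text{global}}$, and sum over $j=1,\dots,N_{ss}$. Because the classes partition the point set, the left sides reassemble into the total sum of squares $\sum_{i=1}^{N}\|x_1^{(i)}-\mathbb{C}_{\text{global}}\|^2$, the first terms on the right give exactly the within-class sum of squares of the lemma, and the residual equals $\sum_{j} N_j\,\|\mathbb{C}_{g^{(j)}}-\mathbb{C}_{\text{global}}\|^2\ge 0$, the between-class sum of squares. Discarding this residual yields the asserted inequality; dividing by $N$ restores the stated normalization. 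The same identity shows the gap between the two sides is precisely the between-class variance, so the bound is strict unless every class centroid $\mathbb{C}_{g^{(j)}}$ coincides with $\mathbb{C}_{\text{global}}$ — which is exactly the quantitative input needed to finish the proof of Proposition~\ref{pro:CS}, once the equal additive noise-variance (``$+3$'') terms computed there are cancelled.

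There is essentially no obstacle here: the statement is the textbook fact that the within-cluster scatter never exceeds the total scatter. The only point warranting a line of care is the vanishing of the cross term in the parallel-axis identity, i.e. that the deviations of a set about its own arithmetic mean sum to zero; everything else is bookkeeping, with the sole thing to double-check being $\sum_j N_j = N$ so that the $1/N$ factors match on both sides.
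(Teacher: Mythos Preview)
Your proposal is correct and takes essentially the same approach as the paper: the paper also invokes the TSS $=$ WCSS $+$ BCSS decomposition and drops the nonnegative between-class term. If anything, you supply more detail by explicitly proving the parallel-axis identity via the vanishing cross term, whereas the paper simply cites the decomposition.
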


    \begin{proof}[Proof of Lemma]
        Consider the decomposition of variance:
        \[
            \frac{1}{N} \sum_{i=1}^{N} \| x_1^{(i,j)} - \mathbb{C}_{\text{global}} \|^2 = \frac{1}{N} \sum_{j=1}^{N_{ss}} \sum_{i=1}^{N_j} \| x_1^{(i,j)} - \mathbb{C}_{g^{(j)}} \|^2 + \frac{1}{N} \sum_{j=1}^{N_{ss}} N_j \| \mathbb{C}_{g^{(j)}} - \mathbb{C}_{\text{global}} \|^2.
        \]
        This is known as the \textbf{Total Sum of Squares (TSS)}~\cite{mardia1979tss} decomposition into \textbf{Within-Cluster Sum of Squares (WCSS)} and \textbf{Between-Cluster Sum of Squares (BCSS)}:
        \[
            \text{TSS} = \text{WCSS} + \text{BCSS}.
        \]
        Since \(\text{BCSS} \geq 0\), it follows that:
        \[
            \text{WCSS} \leq \text{TSS}.
        \]
        Therefore,
        \[
            \frac{1}{N} \sum_{j=1}^{N_{ss}} \sum_{i=1}^{N_j} \| x_1^{(i,j)} - \mathbb{C}_{g^{(j)}} \|^2 \leq \frac{1}{N} \sum_{i=1}^{N} \| x_1^{(i)} - \mathbb{C}_{\text{global}} \|^2.
        \]
    \end{proof}

    Adding the constant term \(3\) to both sides preserves the inequality:
    \[
        \frac{1}{N} \sum_{j=1}^{N_{ss}} \sum_{i=1}^{N_j} \left( \| \mathbb{C}_{g^{(j)}} - x_1^{(i,j)} \|^2 + 3 \right) \leq \frac{1}{N} \sum_{i=1}^{N} \left( \| \mathbb{C}_{\text{global}} - x_1^{(i)} \|^2 + 3 \right).
    \]
    Hence,
    \[
        \frac{1}{N} \sum_{j=1}^{N_{ss}} \sum_{i=1}^{N_j} \mathbb{E}\left[\|x_0^{(i,j)} - x_1^{(i,j)}\|^2\right] \Bigg|_{\text{Class-Specific}} \leq \frac{1}{N} \sum_{i=1}^{N} \mathbb{E}\left[\|x_0^{(i)} - x_1^{(i)}\|^2\right] \Bigg|_{\text{Global}}.
    \]
\end{proof}

\subsection{Proof of Theorem~\ref{thm:final}}
\begin{proof}
We prove this in three stages:

\paragraph{Step 1: Class-Specific Centers Reduce Within-Class Distances.}
By Proposition~\ref{pro:CS} in the main text, partitioning data points into secondary-structure classes and centering the initialization \(\mathbf{x}_0^{(i,j)}\) around \emph{class-specific} centroids \(\mathbb{C}_{g^{(j)}}\) lowers the expected squared distance to the target \(\mathbf{x}_1^{(i,j)}\) compared with using a \emph{global} centroid. Concretely:
\[
  \mathbb{E}\Bigl[\bigl\|\mathbf{x}_1^{(i,j)}-\mathbf{x}_0^{(\mathrm{class})}\bigr\|^2\Bigr]
  \;\le\; 
  \mathbb{E}\Bigl[\bigl\|\mathbf{x}_1^{(i,j)}-\mathbf{x}_0^{(\mathrm{global})}\bigr\|^2\Bigr],
\]
and summing over all classes and residues yields a strictly smaller or equal total distance.

\paragraph{Step 2: OT Coupling Minimizes Transport Cost.} \label{thm:step2}
Recall that the Wasserstein-2 distance can be written as an infimum over all couplings \(\Pi\) with marginals \(q_0\) and \(q_1\). The \emph{optimal} transport coupling \(\Pi^*\) exactly minimizes
\[
   \mathbb{E}_{(\mathbf{x}_0,\mathbf{x}_1)\sim \Pi}\bigl\|\mathbf{x}_1 - \mathbf{x}_0\bigr\|^2.
\]
Hence, using \(\Pi^*\) to re-initialize \(\mathbf{x}_0'\) \emph{must} yield a smaller expected distance to data than any naive (e.g., i.i.d.) pairing.

\paragraph{Step 3: Combining Both for a Tighter Prior.}
\begin{itemize}
    \item \textbf{Class-Specific Initialization} ensures each \(\mathbf{x}_0^{(i,j)}\) starts closer to \(\mathbf{x}_1^{(i,j)}\) within its semantic or structural class than a global scheme would.
    \item \textbf{OT Coupling} then permutes/rearranges these points to optimally match them to the data distribution, further lowering the transport cost.
    \item \textbf{Resulting Improvement:} the distribution \(q_0^{\text{(ours)}}\) of these re-initialized \(\mathbf{x}_0'\) satisfies
    \[
       W^2\!\bigl(q_0^{\text{(ours)}},\,q_1\bigr)
       \;\leq\;
       W^2\!\bigl(q_0^{\text{(class)}},\,q_1\bigr)
       \;\leq\;
       W^2\!\bigl(q_0^{\text{(vanilla)}},\,q_1\bigr).
    \]
\end{itemize}
Thus, our initialization strategy is strictly closer (in the Wasserstein sense) to \(q_1\) than a vanilla prior, yielding \emph{shorter probability paths} for Flow Matching.
\end{proof}

\begin{corollary}[Efficiency of Flow Matching] \label{corollary:efficiency}
The reduced Wasserstein distance enables:
\begin{itemize}
    \item Fewer ODE steps \(T\) for comparable sample quality
    \item Lower variance in vector field estimation
    \item Higher fidelity and coverage, since less “unnecessary traveling” occurs in high-dimensional spaces.
\end{itemize}
\end{corollary}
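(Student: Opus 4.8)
The plan is to read Corollary~\ref{corollary:efficiency} as three quantitative consequences of the bound $W^2(q_0^{(\text{ours})},q_1)\le W^2(q_0^{(\text{vanilla})},q_1)$ established in Theorem~\ref{thm:final}, and to prove a precise surrogate inequality behind each of the three informal bullets rather than the qualitative phrasing itself. The unifying quantity is the \emph{path action} of the straight-line conditional flow-matching interpolation $\mathbf{X}_t=(1-t)\mathbf{X}_0+t\mathbf{X}_1$ under the coupling $\Pi$ that is actually used: its per-sample velocity is the constant $\mathbf{X}_1-\mathbf{X}_0$, so its action $\int_0^1\mathbb{E}_\Pi\|\mathbf{X}_1-\mathbf{X}_0\|^2\,dt$ equals $\mathbb{E}_\Pi\|\mathbf{X}_1-\mathbf{X}_0\|^2$, and when $\Pi$ is the optimal plan this coincides with $W^2(q_0,q_1)$ by the Benamou--Brenier formula. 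Thus Theorem~\ref{thm:final} says exactly that our construction minimizes this action among the schemes considered, and I would derive each bullet as a monotone functional of the action (on the curved modalities, with geodesic-squared cost in place of $\|\cdot\|^2$).

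\textbf{Fewer ODE steps.} I would invoke the standard global error estimate for forward Euler on the marginal ODE $\dot{\mathbf{z}}_t=v_t(\mathbf{z}_t)$: with $T$ steps and $\Delta t=1/T$ the endpoint error is $O\!\big(e^{L}\,\Delta t\,\sup_t\|\partial_t v_t+(\nabla v_t)v_t\|\big)$, where the supremum is a trajectory-acceleration (``bending'') term. The key step is to bound this bending by the dispersion of the coupling: in the deterministic (Monge-map) limit the marginal trajectories are straight, constant-speed lines and a single Euler step is exact, while in general the acceleration is controlled by how far $\Pi$ is from deterministic, which a smaller transport cost under OT drives down (the rectification/straightening effect of \cite{tong2023improving}). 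The surrogate I would state and prove: to reach a fixed endpoint accuracy $\epsilon$ it suffices to take $T$ of order $(\text{action})/\epsilon$ up to a curvature constant, so the smaller action of $q_0^{(\text{ours})}$ yields a smaller sufficient $T$.

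\textbf{Lower variance in vector-field estimation.} The regression target at $(\mathbf{z},t)$ is $\mathbb{E}[\mathbf{X}_1-\mathbf{X}_0\mid\mathbf{X}_t=\mathbf{z}]$, and the variance of the per-example loss gradient is governed by the conditional second moment $\mathbb{E}[\|\mathbf{X}_1-\mathbf{X}_0\|^2\mid\mathbf{X}_t=\mathbf{z}]$, whose average over $t$ and $\mathbf{z}$ is again the path action. Hence I would show that the expected conditional variance of the target, and therefore an upper bound on the variance of the minibatch gradient estimator, is at most the path action (equal to it up to the deterministic component), which Theorem~\ref{thm:final} makes smaller for $q_0^{(\text{ours})}$ and sends to zero in the Monge limit; smaller estimator variance is the precise meaning of the second bullet.

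\textbf{Higher fidelity and coverage.} I would close with a triangle decomposition of the error between the generated law $\hat q$ and $q_1$, using that $W_2=\sqrt{W^2}$ is a metric: $W_2(\hat q,q_1)\le W_2(\hat q,\Phi_\# q_0)+W_2(\Phi_\# q_0,q_1)$, where $\Phi_\# q_0$ is the pushforward under the exact (non-discretized) learned flow. The first term is bounded by the Euler estimate of the first bullet and the second by a standard flow-matching generalization bound whose dominant data-dependent factor is the velocity-estimation error controlled by the second bullet; both shrink when the action shrinks, giving a smaller $W_2(\hat q,q_1)$, and ``coverage'' is then simply the statement that this small error cannot inflate the support of $\hat q$ beyond that of $q_1$. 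The step I expect to be the main obstacle is the first bullet: converting ``shorter $W^2$'' into ``fewer Euler steps'' rigorously requires a discretization bound whose constant genuinely improves with the transport cost, and the only honest route passes through the straightening/curvature estimates for OT couplings, which are not one-line facts; I would therefore present the corollary as a bundle of three precise surrogate inequalities with the qualitative bullets as their interpretation, and note that a fully quantitative first bullet either assumes the near-Monge regime or imports a rectified-flow curvature bound.
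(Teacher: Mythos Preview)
Your proposal is far more ambitious than what the paper actually does. In the paper, Corollary~\ref{corollary:efficiency} is stated immediately after the proof of Theorem~\ref{thm:final} and is given \emph{no proof at all}: it functions as an informal remark summarizing the intended practical payoff of the reduced Wasserstein cost, not as a theorem with a derivation. There is no Euler error estimate, no conditional-variance bound, no triangle decomposition in $W_2$; the three bullets are asserted qualitatively and the text moves directly on to toy examples.

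So there is nothing to ``match'' here. Your plan---to replace each informal bullet by a precise surrogate inequality driven by the path action $\mathbb{E}_\Pi\|\mathbf{X}_1-\mathbf{X}_0\|^2$ and then invoke Theorem~\ref{thm:final}---is a genuine upgrade over the paper rather than a different route to the same destination. Your own caveat about the first bullet is well placed: converting ``smaller $W^2$'' into ``fewer Euler steps'' in a way that is both rigorous and nontrivially dependent on the transport cost really does require either a near-Monge assumption or a curvature/straightening estimate for OT couplings, and neither is supplied by anything in the paper. If you want to align with the paper's level of rigor, a one-sentence heuristic pointer to Theorem~\ref{thm:final} and to the rectified-flow literature (e.g., \cite{tong2023improving}) would suffice; if you want to exceed it, your surrogate-inequality program is a reasonable blueprint, with the understanding that the first bullet is the hard part and the paper offers no help there.
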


\section{Toy Examples}
\subsection{Global Initialization VS Class-Specific Initialization}

\begin{figure}[ht]
    \centering
    \includegraphics[width=\linewidth]{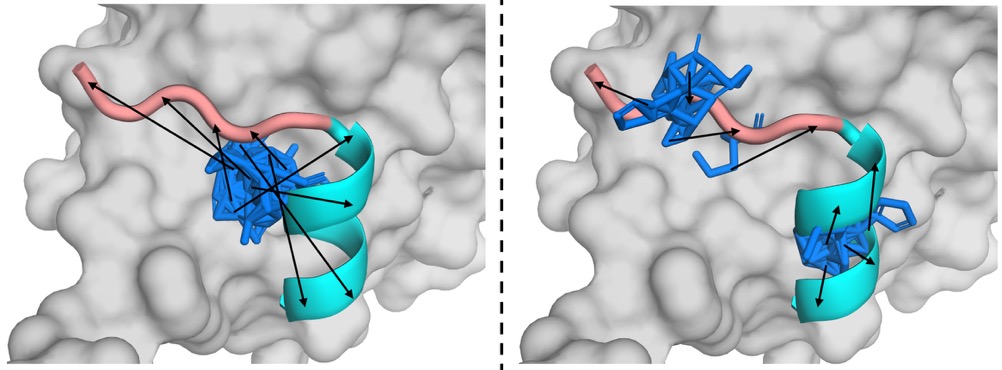}
    \caption{Comparison between global initialization (left) and class-specific initialization (right). The black arrows indicate the flow velocity vectors. As proved in Supplementary Notes~\ref{Improved_Initialization}, the flow trajectories in the right diagram are shorter than those in the left diagram.}
    \label{fig:global_vs_class}
\end{figure}

\subsection{Optimal Transport between 2D Empirical Distributions}

As illustrated in Fig.~\ref{fig:ot_vs_w/o}, the trajectories in the right image show fewer cross-interactions than those in the left image. In flow matching models, ODE trajectories should not intersect during training. If they do, the model may learn a vector field that represents a compromise between the intersecting trajectories, blending their directions. This ambiguity in the vector field can prevent the model from accurately capturing the true flow dynamics, resulting in poor generalization and inaccurate behavior at the intersection points.

\begin{figure}[ht]
    \centering
    \includegraphics[width=\linewidth]{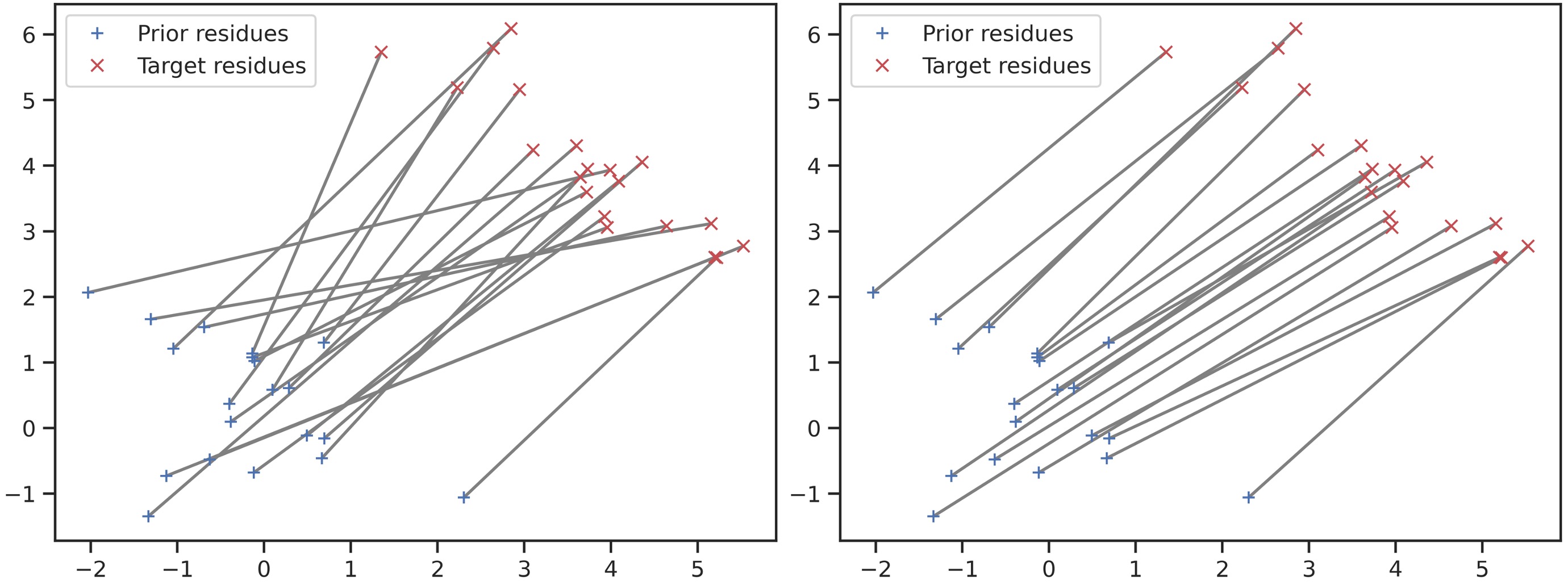}
    \caption{A toy example using 2D points to illustrate the different flow trajectories before (left) and after (right) applying the optimal transport plan.}
    \label{fig:ot_vs_w/o}
\end{figure}



\section{Classification of Peptide Secondary Structures}
Fig.~\ref{fig:ss_number} presents the empirical cumulative distribution function (ECDF) of the total number of secondary‐structure elements (helices, $\beta$‐strands, turns, etc.) per peptide candidate. The steep rise between 3 and 6 elements shows that most peptides contain 3–6 secondary‐structure features, with the cumulative proportion approaching unity as the count increases. Besides, as illustrated in Table~\ref{table:ss_table}, we list the eight DSSP secondary‐structure codes and their definitions, which we have consolidated into three primary categories: Type 0 (helical structures: H, G, I), Type 1 (strand structures: B, E), and Type 2 (turn/coil structures: T, S, –). This grouping preserves the main helix and strand classes while combining all flexible or unstructured segments into a single category to facilitate subsequent statistical analyses.

\begin{figure}[ht]
    \centering
    \includegraphics[width=0.5\linewidth]{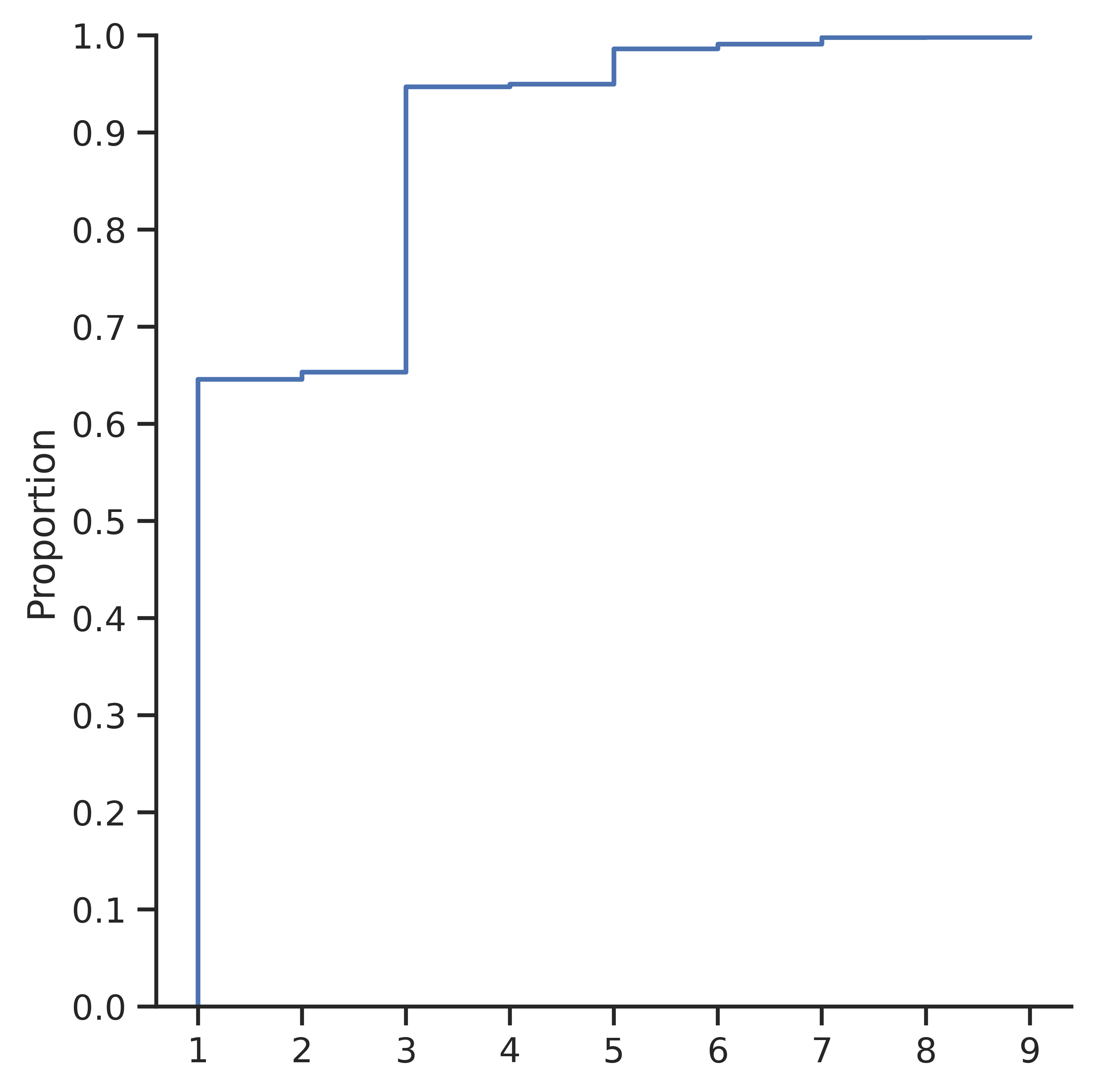}
    \caption{Empirical Cumulative Distribution Function (ECDF) analysis of the distribution of secondary structure counts in individual peptides.}
    \label{fig:ss_number}
\end{figure}

\begin{table*}[]
\renewcommand{\arraystretch}{1.5}
\centering
\begin{tabular}{|c|c|c|}
\hline
\textbf{Code} & \textbf{Structure} & \textbf{Type Index} \\ \hline
H & Alpha helix & 0 \\ \hline
B & Isolated beta-bridge residue & 1 \\ \hline
E & Strand & 1 \\ \hline
G & 3-10 helix & 0 \\ \hline
I & Pi helix & 0 \\ \hline
T & Turn & 2 \\ \hline
S & Bend & 2 \\ \hline
- & None & 2 \\ \hline
\end{tabular}
\caption{Eight types of secondary structures. For simplicity, we consolidate these into three primary categories, designated as 0, 1, and 2.}
\label{table:ss_table}
\end{table*}

\section{Experiment Setup}
\subsection{Model Architecture} \label{app:net}
Our flow model is primarily developed based on the guidelines set forth in PepFlow~\cite{li2024pepflow} and the Invariant Point Attention (IPA) architecture as detailed by Jumper et al.~\cite{jumper2021highly}, which utilizes both the provided features and the structural frames of the backbone. This model employs an invariant attention mechanism to analyze the interactions between the receptor protein and the peptide backbone currently under consideration.

Here, we give a detailed description of the model architecture.

\subsubsection*{Input Definitions}

Given a protein structure consisting of $N$ residues and a batch size of $B$, the input to the encoder includes:

\begin{itemize}
    \item Discrete amino acid sequence types $\text{seq}_i \in \{1,\dots,22\}$.
    \item Scalar timestep $t$ used in denoising diffusion models.
    \item Side chain torsion angles $\boldsymbol{\theta}_i \in \mathbb{R}^{5}$ ($\chi_1$, $\chi_2$, $\chi_3$, $\chi_4$, $\chi_5$).
    \item Rigid-body transformations $(\mathbf{R}_i^{(0)}, \mathbf{t}_i^{(0)})$ for each residue $i$.
\end{itemize}

\subsection*{NodeEmbedder}
For each residue \(i\):
\[
\mathbf{h}_i^{\mathrm{init}}
= \mathrm{MLP}_{\mathrm{node}}\bigl(\bigl[e_i^{\mathrm{aa}}
  \;\Vert\;
  e_i^{\mathrm{crd}}
  \;\Vert\;
  e_i^{\mathrm{dihed}}\bigr]\bigr)
\;\in\;\mathbb{R}^{d_s}\,,
\]
where:
\begin{itemize}
  \item \(e_i^{\mathrm{aa}}\) is the learned embedding of the amino-acid type.
  \item \(e_i^{\mathrm{crd}}\) are the local (C\(_\alpha\)-frame) atom coordinates flattened by one-hot selection.
  \item \(e_i^{\mathrm{dihed}}\) is the angular encoding of \((\phi,\psi,\omega)\).
\end{itemize}
The output is masked by whether the C\(_\alpha\) exists.

\subsection*{EdgeEmbedder}
For each residue pair \((i,j)\):
\[
\mathbf{Z}_{ij}^{(0)}
= \mathrm{MLP}_{\mathrm{edge}}\bigl(\bigl[e_{ij}^{\mathrm{aapair}}
  \;\Vert\;
  e_{ij}^{\mathrm{relpos}}
  \;\Vert\;
  e_{ij}^{\mathrm{dist}}
  \;\Vert\;
  e_{ij}^{\mathrm{dihed}}\bigr]\bigr)
\;\in\;\mathbb{R}^{d_z}\,,
\]
where:
\begin{itemize}
  \item \(e_{ij}^{\mathrm{aapair}}\) embeds the pair of residue types.
  \item \(e_{ij}^{\mathrm{relpos}}\) embeds their relative sequence offset (clamped).
  \item \(e_{ij}^{\mathrm{dist}}\) projects Gaussian-kernelized inter-atomic distances.
  \item \(e_{ij}^{\mathrm{dihed}}\) encodes the two inter-residue dihedrals.
\end{itemize}
The output is masked to same-chain residue pairs.

\subsection*{Initial Feature Mixing}
For each residue \(i\), fuse node, sequence, timestep and angle embeddings:
\[
\mathbf{H}_i^{(0)}
= \mathrm{MLP}\Bigl(\bigl[
    \mathbf{h}_i^{\mathrm{init}}
    \;\Vert\;
    \mathbf{s}_i
    \;\Vert\;
    \mathbf{t}_i
    \;\Vert\;
    \mathbf{a}_i
  \bigr]\Bigr)
\;\in\;\mathbb{R}^{d_s}\,,
\]
where:
\begin{itemize}
  \item \(\mathbf{s}_i\) is the learned embedding of the \emph{current} amino-acid at step \(t\).
  \item \(\mathbf{t}_i\) is the time-step embedding.
  \item \(\mathbf{a}_i\) is the angular encoding of the sampled torsions.
  \item \(\Vert\) denotes concatenation.
\end{itemize}





\subsubsection*{Iterative Block Updates}

Let $B$ denote the total number of blocks. Each block $b = 0, \dots, B-1$ performs the following computations:

\begin{equation*}
    \Delta \mathbf{H}_i^{\text{IPA}} = \text{IPA}^{(b)}(\mathbf{H}_i^{(b)}, \mathbf{Z}_{ij}^{(b)}, (\mathbf{R}_i^{(0)}, \mathbf{t}_i^{(0)})), 
\end{equation*}

\subsubsection*{Transformer Sequence Attention}

The output of IPA is combined with a skip connection and passed through a Transformer encoder:

\begin{align}
    \mathbf{X}_i^{(b)} &= \text{LayerNorm}(\mathbf{H}_i^{(b)} + \Delta \mathbf{H}_i^{\text{IPA}}), \\
    \mathbf{X}_i^{(b)} &= \text{TransformerEncoder}^{(b)}(\mathbf{X}_i^{(b)}), \\
    \mathbf{H}_i^{(b+1)} &= \text{MLP}(\mathbf{H}_i^{(b)} + \text{Linear}(\mathbf{X}_i^{(b)})).
\end{align}

\subsubsection*{Feedforward Transition and Rigid Update}

Node features are further processed and used to compute rigid-body updates:

\begin{align}
    \Delta \mathbf{r}_i^{(b)} &= \text{RigidUpdate}(\mathbf{H}_i^{(b+1)}), \\
    (\mathbf{R}_i^{(b+1)}, \mathbf{t}_i^{(b+1)}) &= (\mathbf{R}_i^{(b)}, \mathbf{t}_i^{(b)}) \circ \Delta \mathbf{r}_i^{(b)},
\end{align}

where $\circ$ denotes the composition of rigid transformations.

\subsubsection*{Edge Feature Update}

For all blocks except the last, edge embeddings are updated:

\begin{align}
    \mathbf{Z}_{ij}^{(b+1)} = \text{EdgeTransition}(\mathbf{H}_i^{(b+1)}, \mathbf{Z}_{ij}^{(b)}).
\end{align}

\subsubsection*{Final Output Predictions}

After $B$ blocks of iteration, the final outputs include:

\begin{align}
    \hat{\mathbf{R}}_i &= \mathbf{R}_i^{(B)}, \\
    \hat{\mathbf{t}}_i &= \mathbf{t}_i^{(B)}, \\
    \hat{\boldsymbol{\theta}}_i &= \text{AngleNet}(\mathbf{H}_i^{(B)}) \bmod 2\pi, \\
    \hat{\mathbf{p}}_i &= \text{SeqNet}(\mathbf{H}_i^{(B)}),
\end{align}

where $\hat{\mathbf{R}}_i$ and $\hat{\mathbf{t}}_i$ define the predicted rigid-body transformation of residue $i$, $\hat{\boldsymbol{\theta}}_i$ denotes the predicted side chain torsion angles, and $\hat{\mathbf{p}}_i$ is the predicted amino acid probability distribution.

\subsection{Training}
The distinctions between the three models are specified as follows: the class priors are omitted in the \textbf{POTFlow w/o class prior}; the \textbf{POTFlow w/o OT} is characterized by the exclusion of the optimal transport policy; and \textbf{POTFlow} represents the comprehensive model that incorporates all features, including the optimal transport policy and the class prior. The parameters are set as follows: $\lambda_1 = 0.5, \lambda_2 = 0.5, \lambda_3 = 1.0, \lambda_4 = 1.0$.

Each of these models undergoes training on four NVIDIA A100 GPUs, employing a Distributed Data Parallel (DDP) training framework over 2 million iterations. We maintain a learning rate of $5\times10^{-4}$ and configure the batch size to 16 for each GPU in the distributed setup.

\subsection{Sampling}
The sampling process is conducted on a single NVIDIA A100 GPU, utilizing 200 timesteps during the Euler step update. For each protein in the test set, we concurrently sample 64 peptides.

\section{Pseudocode} \label{pseudocode}

We outlined the overall training and sampling procedures here.
\begin{algorithm}[ht]
\caption{POTFlow Training Procedure}
\label{alg:training}
\begin{algorithmic}[1]
\State \textbf{Input:} Peptide dataset $D = \{(\mathcal{P}^i, \mathcal{G}^i_1 )\}_{i=1}^{n}$, where $\mathcal{P}^i$ represents the protein target and $\mathcal{G}^i$ represents the peptide.
\While{not converged}
    \State Sample a protein-peptide pair $(\mathcal{P}^i, \mathcal{G}^i_1)$ from the dataset and a noise peptide $\mathcal{G}^i_0$ from the multimodal prior distributions.
    \State Compute the multimodal transport policy $\Pi$ for the pair $(\mathcal{G}^i_0, \mathcal{G}^i_1)$ using Equation \eqref{eq:s2ot_pos}, \eqref{eq:s2ot_ori}, \eqref{eq:s2ot_type}, and \eqref{eq:s2ot_ang}, then sample a new noise peptide $\mathcal{G}^{i'}_0$ from the policy $\Pi$.
    \State Sample $t \sim U(0,1)$, and compute a noisy peptide $\mathcal{G}^{i'}_t$ by interpolating between $\mathcal{G}^{i'}_0$ and the target peptide $\mathcal{G}^{i}_1$.
    \State Compute the flow vector fields for the interpolated peptide $\mathcal{G}_t^{i'}$ using the neural network $v_t(\mathcal{G}_t^{i'}, \mathcal{P}; \theta)$.
    \State Compute the loss function $\mathcal{L}_\theta^{all}$ using Eq. \eqref{eq:all_loss} and update the model parameters $\theta$ accordingly.
\EndWhile
\State \textbf{Return:} Final model parameters $\theta$.
\end{algorithmic}
\end{algorithm}

\begin{algorithm} 
\caption{POTFlow Sampling Procedure}
\label{alg:sampling}
\begin{algorithmic}[1]
\State \textbf{Input:} The lead peptide $\mathcal{G}_{l}$ and its binding target protein $\mathcal{P}$.
\State Initialize $\mathcal{G}_0$ from the multimodal prior distributions.
\State Compute optimal transport map $\Pi$ between the peptide pair $(\mathcal{G}_0, \mathcal{G}_{l})$.
\State Sample $\mathcal{G}_0' \sim \Pi$.
\For{$t = 0$ \textbf{to} $1$ \textbf{step} $\frac{1}{T}$}
    \State Caculate the flow vector fields predicted by the neural network $v_t(\mathcal{G}_t', \mathcal{P}; \theta)$.
    \State Update the noisy peptide $\mathcal{G}_t'$ by the Euler methods in Equation \eqref{eq:sample_pos}, \eqref{eq:sample_ori}, \eqref{eq:sample_type} and \eqref{eq:sample_ang}.
\EndFor
\State \textbf{Return:} Final generated peptides $\mathcal{G}_1'$.
\end{algorithmic}
\end{algorithm}

\newpage
\section{Visualization} \label{sup:vis}
As shown in Fig.~\ref{app:vis}, we selected six peptides of different lengths from the test set for visualization. We found that shorter peptides closely match the lead peptide in sequence, which correlates with lower RMSD values. In contrast, longer peptides demonstrate increased sequence diversity and higher RMSD values, a reasonable outcome given their greater degrees of freedom. 

\begin{figure}[htb]
    \centering
    \includegraphics[width=\linewidth]{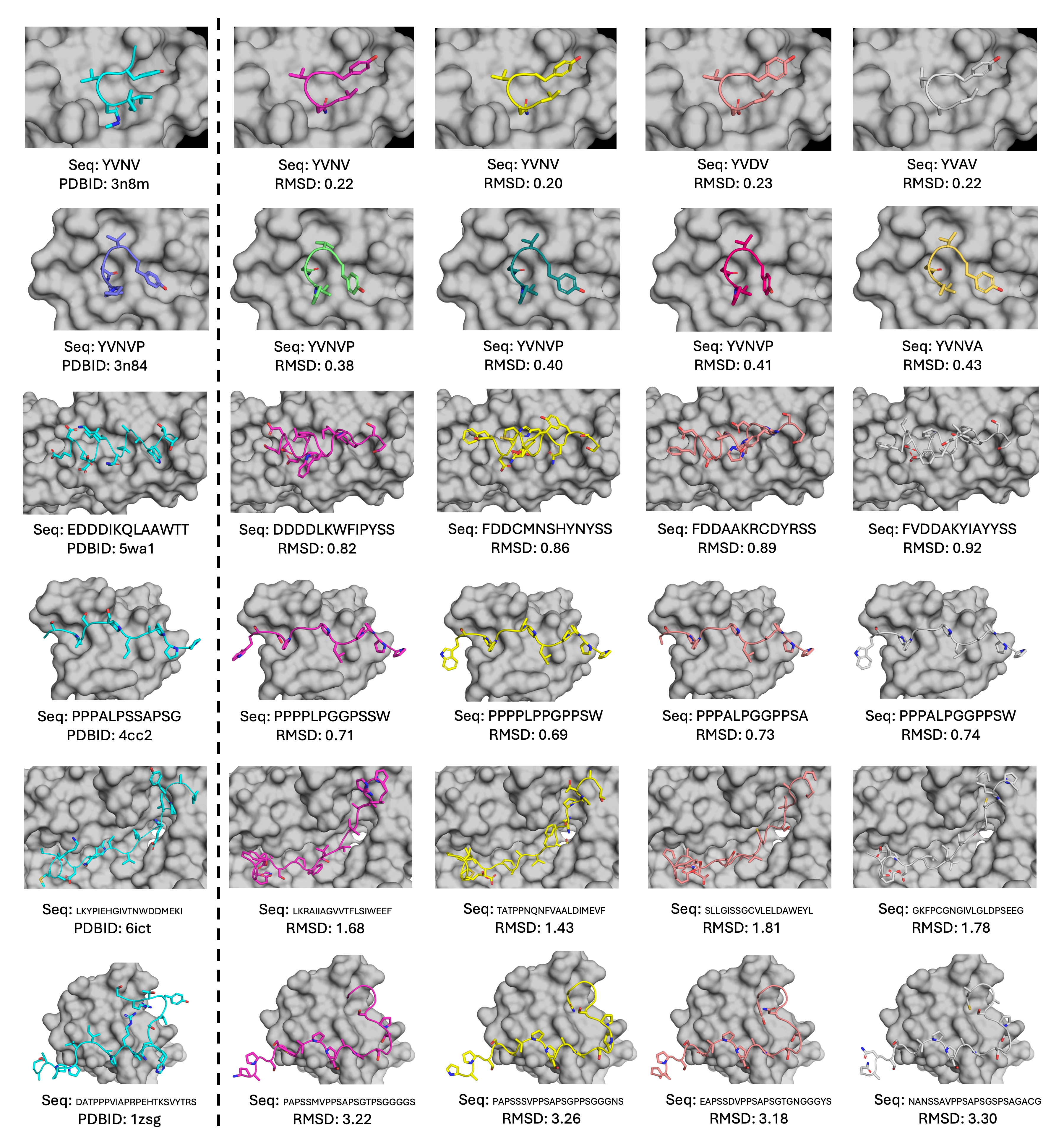}
    \caption{Visualization of peptides generated by POTFlow. The left column displays the lead peptide, whereas the right four columns showcase the peptides our method generated. We choose peptides of different lengths for our visualization. The first two rows have the shorter peptides, the middle two rows show peptides of a medium length, and the final two rows are made up of the longer peptides.}
    \label{app:vis}
\end{figure}

\section{ATP Concentrations in Different Cell Lines}
\begin{figure}[H]
    \centering
    \includegraphics[width=0.8\linewidth]{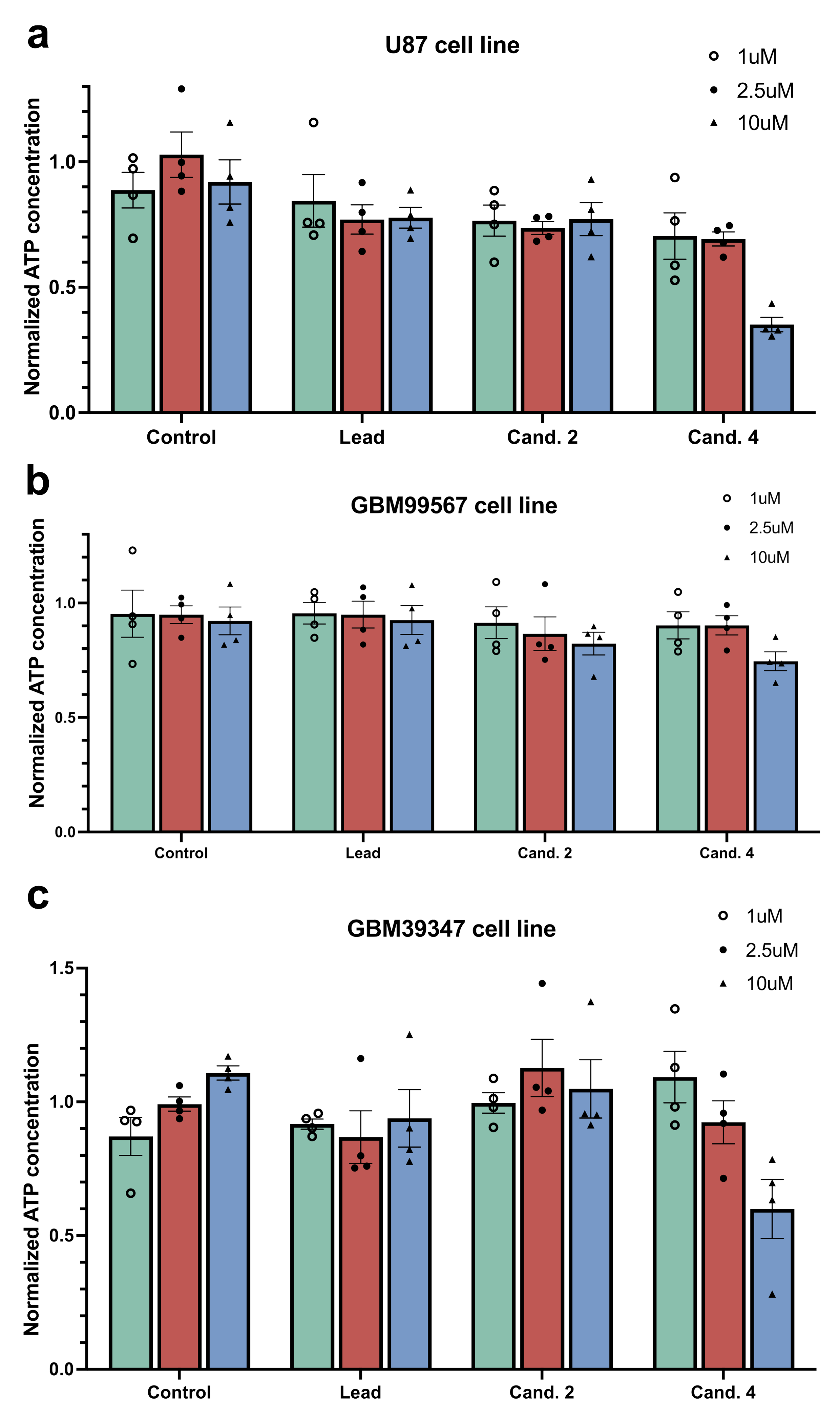}
    \caption{ATP concentrations in GBM cell lines (U87, 99567, and 39347) treated for 4 hours with control, the lead peptide, Cand.2, and Cand.4 at concentrations of 1.0$\mu$M (open circles), 2.5$\mu$M (filled circles), and 10~$\mu$M (triangles).}
    \label{fig_app:atp}
\end{figure}

\section{Inhibition of Viability Rate}
\begin{table*}[h]
    \centering
\renewcommand{\arraystretch}{1.5}
\resizebox{\linewidth}{!}{
    \begin{tabular}{cccccccccc}
        \toprule
        \multirow{2}{*}{Cell line} & \multicolumn{3}{c}{1$\mu$M} & \multicolumn{3}{c}{2.5$\mu$M} & \multicolumn{3}{c}{10$\mu$M} \\
        \cmidrule(r){2-4}
        \cmidrule(r){5-7}
        \cmidrule(l){8-10}
         & Lead & Cand.~4 & IVR \%  & Lead & Cand.~4 & IVR \%  & Lead & Cand.~4 & IVR \%     \\
        \midrule
        \multirow{2}{*}{Cell line 293} 
          & \multirow{2}{*}{0.60} & \multirow{2}{*}{0.54} & \multirow{2}{*}{10.00\%} & \multirow{2}{*}{0.57} & \multirow{2}{*}{0.50} & \multirow{2}{*}{12.28\%} & \multirow{2}{*}{0.56} & \multirow{2}{*}{0.52} & \multirow{2}{*}{7.14\%} \\
          (non-cancerous) &&&&&&&& \\
        \hline
        \multirow{2}{*}{Cell line 99567} 
          & \multirow{2}{*}{1.01} & \multirow{2}{*}{0.79} & \multirow{2}{*}{\textbf{21.78\%}} & \multirow{2}{*}{0.71} & \multirow{2}{*}{0.59} & \multirow{2}{*}{16.90\%} & \multirow{2}{*}{0.57} & \multirow{2}{*}{0.18} & \multirow{2}{*}{\textbf{68.42\%}} \\
          (GBM) &&&&&&&& \\
        \hline
        \multirow{2}{*}{Cell line U87} 
          & \multirow{2}{*}{0.55} & \multirow{2}{*}{0.45} & \multirow{2}{*}{\textbf{18.18\%}} & \multirow{2}{*}{0.54} & \multirow{2}{*}{0.46} & \multirow{2}{*}{14.81\%} & \multirow{2}{*}{0.53} & \multirow{2}{*}{0.40} & \multirow{2}{*}{\textbf{24.53\%}}\\
          (GBM) &&&&&&&& \\
        \bottomrule
    \end{tabular}
}   
    \vspace{1.5em}
    \caption{We summarize the cell viability measurements for the lead peptide and Cand.~4 across three cell lines (including one non-cancerous and two GBM cell lines) at three concentration levels (1, 2.5, and 10$\mu$M). The inhibition of viability rate (IVR\%) is computed to quantify the enhanced inhibitory effect of Cand.~4 over the lead peptide, defined as the relative reduction in viability. $ \text{IVR} = \frac{\text{Viability}_{\text{Lead}} - \text{Viability}_{\text{Cand.\ 4}}}{\text{Viability}_{\text{Lead}}} \times 100\% $.}
    \label{tab:cell_selectivity}
\end{table*}

\section{Log-rank Test in PDX Model}
\begin{table*}[h!]
\centering
\renewcommand{\arraystretch}{1.5}
\begin{tabular}{l c}
\hline
 Groups & p-value \\
\hline
Control vs Low dose   & 0.2325 \\
Low dose vs High dose & 0.4302 \\
Control vs High dose  & 0.0198$^*$ \\
\hline
\end{tabular}
\caption{Log-rank test results comparing survival curves between groups. $^*$ represent s statistically significant at the 0.05 level.}
\label{tab:Log-rank}
\end{table*}

\end{document}